\begin{document}
\hyphenation{pro-blems sol-ving complexi-ty dia-gram poly-gonal analy-sis ins-tance des-cribed va-lues}

\pagestyle{headings}  
\addtocmark{Synergistic Computation of Planar Maxima and Convex Hull}

\mainmatter 
\title{ Synergistic Computation \\of Planar Maxima and Convex Hull }

\titlerunning{Synergistic Computation of Planar Maxima and Convex Hull}
\author{
  J\'er\'emy Barbay\inst{1} 
  \and 
  Carlos Ochoa\inst{1}\thanks{Corresponding author.}
}

\institute{
  Departamento de Ciencias de la Computaci\'on, Universidad de Chile, Chile\\ 
  \email{jeremy@barbay.cl, cochoa@dcc.uchile.cl.}
}

\maketitle              

\begin{abstract}
    Refinements of the worst case complexity over instances of fixed input size consider the input order or the input structure, but rarely both at the same time. Barbay et al. [2016] described ``synergistic'' solutions on multisets, which take advantage of the input order and the input structure, such as to asymptotically outperform any comparable solution which takes advantage only of one of those features. We consider the extension of their results to the computation of the \textsc{Maxima Set} and the \textsc{Convex Hull} of a set of planar points. After revisiting and improving previous approaches taking advantage only of the input order or of the input structure, we describe synergistic solutions taking optimally advantage of various notions of the input order and input structure in the plane. As intermediate results, we describe and analyze the first adaptive algorithms for \textsc{Merging Maxima} and \textsc{Merging Convex Hulls}.
\end{abstract}

\begin{center}
  \begin{minipage}{.9\textwidth}
    \noindent{\bf Keywords:}
    Convex Hull,
    Dominance Query,
    Maxima,
    Membership Query,
    Multivariate Analysis,
    Synergistic.
  \end{minipage}
\end{center}

\section{Introduction}

One way to close the gap between practical performance and the worst case complexity over instances of fixed input size is to refine the later, considering smaller classes of instances. Such measures of difficulty can be seen along two axis: some depend of the \emph{structure} of the input, such as the repetitions in a multiset~\cite{1976-JComp-SortingAndSearchingInMultisets-MunroSpira}, or the positions of the input points in the plane~\cite{2009-FOCS-InstanceOptimalGeometricAlgorithms-AfshaniBarbayChan} or in higher dimensions~\cite{1985-SOCG-OutputSizeSensitiveAlgorithmsForFindingMaximalVectors-KirkpatrickSeidel}; while some others depend on the \emph{order} in which the input is given, such as for permutations~\cite{1992-ACJ-AnOverviewOfAdaptiveSorting-MoffatPetersson} but also for points in the plane~\cite{2002-SWAT-AdaptiveAlgorithmsForConstructingConvexHullsAndTriangulationsOfPolygonalChains-LevcopoulosLingasMitchell,2011-IEICE-AdaptiveAlgorithmsForPlanarConvexHullProblems-AhnOkamoto}.


Barbay~et al.~\cite{2016-ARXIV-SynergisticSortingAndDeferredDataStructuresOnMultiSets-BarbayOchoaSatty} described various ``synergistic'' solutions on multisets, which take advantage of both the input structure and the input order, in such a way that each of their solutions is never asymptotically slower than other solutions taking advantage of a subset of these features, but also so that on some family of instances, each of their solution performs an order of magnitude faster than any solution taking advantage of only a subset of those features. They left open the generalization of their results to higher dimensions.

In the context of the computation of the \textsc{Maxima Set} and of the \textsc{Convex Hull}, various refinements of the worst case complexity over instances of fixed size have been known for some time.
Kirkpatrick and Seidel described algorithms optimal in the worst case over instances of fixed input and output size, first in 1985 for the computation of the \textsc{Maxima Set} of points in any dimension~\cite{1985-SOCG-OutputSizeSensitiveAlgorithmsForFindingMaximalVectors-KirkpatrickSeidel} and then in 1986 for the computation of the \textsc{Convex Hull} in the plane~\cite{1986-JCom-TheUltimatePlanarConvexHullAlgorithm-KirkpatrickSeidel}: such results can be classified as focused on the \textbf{input structure}, and were further refined in 2009 by Afshani et al.~\cite{2009-FOCS-InstanceOptimalGeometricAlgorithms-AfshaniBarbayChan} in 2 and 3 dimensions.
Following a distinct approach, Levcopoulos et al.~\cite{2002-SWAT-AdaptiveAlgorithmsForConstructingConvexHullsAndTriangulationsOfPolygonalChains-LevcopoulosLingasMitchell} in 2002, and Ahn and Okamoto~\cite{2011-IEICE-AdaptiveAlgorithmsForPlanarConvexHullProblems-AhnOkamoto} in 2011, studied the computation of the \textsc{Convex Hull} in conjunction with various notions of \textbf{input order}: these results can be generalized to take advantage of the input order when computing \textsc{Maxima Sets} (Section~\ref{sec:inputOrderAdaptivePlanarMaxima}) and can be further refined using recent techniques (Section~\ref{sec:inputOrderAdaptiveConvexHull}).
Yet no algorithm (beyond a trivial dovetailing combination of the solutions described above) is known to take advantage of both the \textbf{input structure} and \textbf{input order} at the same time for the computation of the \textsc{Maxima Set} or of the \textsc{Convex Hull} of points, in the plane or in higher dimension, nor for any other problem than \textsc{Sorting Multisets}.

\paragraph{Hypothesis.}
It seems reasonable to expect that Barbay et al.'s synergistic results~\cite{2016-ARXIV-SynergisticSortingAndDeferredDataStructuresOnMultiSets-BarbayOchoaSatty} on \textsc{Sorting Multisets} should generalize to similar problems in higher dimension, such as the computation of the \textsc{Maxima Set} and of the \textsc{Convex Hull} of a set of points in the plane.
Yet these two problems present new difficulties of their own: 
(1) while the results on multisets~\cite{2016-ARXIV-SynergisticSortingAndDeferredDataStructuresOnMultiSets-BarbayOchoaSatty} are strongly based on a variant of Demaine et al.'s instance optimal algorithm to \textsc{Merge Multisets}~\cite{2000-SODA-AdaptiveSetIntersectionsUnionsAndDifferences-DemaineLopezOrtizMunro}, at this date no such results are known for \textsc{Merging Maxima Sets}, and the closest known result for \textsc{Merging Convex Hulls}~\cite{2008-CCCG-ConvexHullOfTheUnionOfConvexObjectsInThePlane-BarbayChen} (from 2008) is not adaptive to the size of the output, and hence not adaptive to the structure of the instance; furthermore
(2) whereas many \emph{input order adaptive} results are known for \textsc{Sorting Multisets} (with two surveys in 1992 on the topic~\cite{1992-ACJ-AnOverviewOfAdaptiveSorting-MoffatPetersson,1992-ACMCS-ASurveyOfAdaptiveSortingAlgorithms-EstivillCastroWood}, and various additional results~\cite{2009-Chapter-PartialSolutionAndEntropy-Takaoka,2013-TCS-OnCompressingPermutationsAndAdaptiveSorting-BarbayNavarro} since then), it seems that none are known for the computation of the \textsc{Maxima Set} and that only a few are known for the computation of the \textsc{Convex Hull}~\cite{2011-IEICE-AdaptiveAlgorithmsForPlanarConvexHullProblems-AhnOkamoto,2002-SWAT-AdaptiveAlgorithmsForConstructingConvexHullsAndTriangulationsOfPolygonalChains-LevcopoulosLingasMitchell}.

\paragraph{Our Results.}
\begin{LONG}After reviewing previous results on the computation of \textsc{Convex Hull} taking advantage of either the input structure (Section~\ref{sec:InputStructure}) or the input order (Section~\ref{sec:InputOrder}), and one result on \textsc{Sorting Multisets} taking advantage of both (Section~\ref{sec:synerg}), w\end{LONG}\begin{SHORT} W\end{SHORT}e confirm the hypothesis by
(1) presenting new solutions for \textsc{Merging Maxima Sets} (Section~\ref{sec:quick-maxima}) and \textsc{Merging Convex Hulls} (Section~\ref{sec:UpperHullUnion}) in the plane,
(2) defining new techniques to take advantage of the input order to compute \textsc{Maxima Sets} (Section~\ref{sec:inputOrderAdaptivePlanarMaxima}),
(3) improving previous techniques to analyze the computation of \textsc{Convex Hulls} in function of the input order (Section~\ref{sec:inputOrderAdaptiveConvexHull}), and
(4) synthesizing all those results in synergistic algorithms to compute \textsc{Maxima Sets} (Section \ref{sec:synergMaxima}) and \textsc{Convex Hulls} (Section~\ref{sec:synergisticUpperHull}) of a set of planar points.
For the sake of pedagogy, \textbf{we present our results incrementally, from the simplest to the most complex}.
\begin{LONG}
  We define formally the notions of input order, input structure and synergistic solution in Section~\ref{sec:back}.
\end{LONG}
We then describe synergistic solutions for the computation of both the \textsc{Maxima Set} (Section~\ref{sec:maxima}) and of the \textsc{Convex Hull} (Section~\ref{sec:convex}) of points in the plane (the latter requiring more advance techniques).
\begin{LONG}
In both case, our solution is based on an algorithm merging several partial solutions (Sections~\ref{sec:quick-maxima} and \ref{sec:UpperHullUnion}), adapted from Barbay et al.'s \texttt{Quick Hull Merge} algorithm~\cite{2016-ARXIV-SynergisticSortingAndDeferredDataStructuresOnMultiSets-BarbayOchoaSatty} to merge multisets\footnote{Itself inspired from Demaine et al.'s algorithm solving the same problem~\cite{2001-SODA-lowerboundindex-DemaineLopezOrtiz}.}.
  We conclude in Section~\ref{sec:discussion} with a partial list of issues left open for improvement.
\end{LONG}
Due to space constraints we only state our results in the article and defer all the proofs to the appendix.

\begin{BACK}
  \section{Background}
  \label{sec:back}

  Beyond the worst case complexity over instances of fixed size, the adaptive analysis of algorithms refines the scope of the analysis by considering the worst case complexity over finer classes of instances.
  \begin{SHORT}
    We describe here a selection of results adaptive to the \emph{input structure}, to the \emph{input order} and to both, \emph{synergistically}.
  \end{SHORT}
  \begin{VLONG}
    We describe here some relevant results along two axis: results about the computation of the \textsc{Convex Hull} which take advantage of the \emph{input structure}\begin{LONG} (Section~\ref{sec:InputStructure})\end{LONG}, results about the computation of the \textsc{Convex Hull} which take advantage of the \emph{input order}\begin{LONG} (Section~\ref{sec:InputOrder})\end{LONG}, and one result about \textsc{Sorting Multisets} which depends of both\begin{LONG} (Section~\ref{sec:synerg})\end{LONG}.
  \end{VLONG}

  \subsection{Input Structure}
  \label{sec:InputStructure}
  In 1985, Kirkpatrick and Seidel~\cite{1985-SOCG-OutputSizeSensitiveAlgorithmsForFindingMaximalVectors-KirkpatrickSeidel} described an algorithm to compute the \textsc{Maxima Set} of points in any dimension, which is optimal in the worst case over instances of input size $n$ and output size $h$, running in time within $O(n\log h)$.
  One year later, in 1986, they~\cite{1986-JCom-TheUltimatePlanarConvexHullAlgorithm-KirkpatrickSeidel} described a slightly more complex algorithm to compute the \textsc{Convex Hull} in the plane, which is similarly optimal in the worst case over instances of input size $n$ and output size $h$, running in time within $O(n\log h)$.
  Both results are described as \emph{output sensitive}, in the sense that the complexity depends on the size of the output, and can be classified as adaptive to the \emph{input structure}, as the position of the points clearly determine the output (and its size), as opposed to algorithms described in the next paragraph taking advantage of the order in which those points are given.

  \subsection{Input Order}
  \label{sec:InputOrder}

\begin{SHORT}
  In 2002, inspired by previous results on \textsc{Sorting Permutations}~\cite{1994-IC-SortingShuffledMonotoneSequences-LevcopoulosPetersson}, Levcopoulos et al.~\cite{2002-SWAT-AdaptiveAlgorithmsForConstructingConvexHullsAndTriangulationsOfPolygonalChains-LevcopoulosLingasMitchell} described an adaptive algorithm to compute the \textsc{Convex Hull} of $n$ points in the plane in time within $O(n\lg\kappa)$, where $\kappa$ is the minimal number of simple chains into which the input sequence of points can be partitioned.
\end{SHORT}
\begin{LONG}
  A \emph{polygonal chain} is a curve specified by a sequence of points $p_1, \dots, p_n$. The curve itself consists of the line segments connecting the pairs of consecutive points. A polygonal chain $C$ is \emph{simple} if any two edges of $C$ that are not adjacent are disjoint, or if the intersection point is a vertex of $C$; and any two adjacent edges share only their common vertex. Melkman~\cite{1987-IPL-OnLineConstructionOfTheConvexHullOfASimplePolyline-Melkman} described an algorithm that computes the {\sc{Convex Hull}} of a simple polygonal chain in linear time, and Chazelle~\cite{1991-DCG-TriangulatingASimplePolygonInLinearTime-Chazelle} described an algorithm for testing whether a polygonal chain is simple in linear time.
  In 2002, Levcopoulos et al.~\cite{2002-SWAT-AdaptiveAlgorithmsForConstructingConvexHullsAndTriangulationsOfPolygonalChains-LevcopoulosLingasMitchell} combined these results to yield an algorithm for computing the {\sc{Convex Hull}} of polygonal chains.
  Their algorithm tests if the chain $C$ is simple, using Chazelle's algorithm~\cite{1991-DCG-TriangulatingASimplePolygonInLinearTime-Chazelle}: if the chain $C$ is simple, the algorithm computes the {\sc{Convex Hull}} of $C$ in linear time, using Melkman's algorithm~\cite{1987-IPL-OnLineConstructionOfTheConvexHullOfASimplePolyline-Melkman}. Otherwise, if $C$ is not simple, the algorithm partitions $C$ into the subsequences $C'$ and $C''$, whose sizes differ at most in one; recurses on each of them; and merges the resulting {\sc{Convex Hulls}} using Preparata and Shamos's algorithm~\cite{1985-BOOK-ComputationalGeometryAnIntroduction-PreparataShamos}.
  They measured the complexity of this algorithm in terms of the minimum number of simple subchains $\kappa$ into which the chain $C$ can be partitioned. Let $t(n, \kappa)$ be the worst-case time complexity taken by this algorithm for an input chain of $n$ vertices that can be partitioned into $\kappa$ simple subchains.  They showed that $t(n, \kappa)$ satisfies the following recursion relation: $t(n, \kappa) \leq t(\lceil \frac{n}{2} \rceil, \kappa_1) + t(\lfloor \frac{n}{2} \rfloor, \kappa_2), \kappa_1 + \kappa_2 \leq \kappa + 1$. The solution to this recursion gives $t(n, \kappa) \in O(n(1{+}\log{\kappa}))\subseteq O(n\log n)$.
\end{LONG}
\begin{SHORT}
  In 2011, Ahn and Okamoto~\cite{2011-IEICE-AdaptiveAlgorithmsForPlanarConvexHullProblems-AhnOkamoto} considered two variants of the problem, where the output is a permutation of the input such that the \textsc{Convex Hull} can be checked and extracted in linear time from it. They describe an adaptive result directly inspired from a disorder measure introduced for the study of adaptive algorithms for \textsc{Sorting Permutations}, the number \texttt{Inv} of inversions in the permutation~\cite{1992-ACJ-AnOverviewOfAdaptiveSorting-MoffatPetersson,1992-ACMCS-ASurveyOfAdaptiveSortingAlgorithms-EstivillCastroWood}.
\end{SHORT}
\begin{LONG}
  
  In 2011, Ahn and Okamoto~\cite{2011-IEICE-AdaptiveAlgorithmsForPlanarConvexHullProblems-AhnOkamoto} followed a distinct approach for the computation of the \textsc{Convex Hull}, also based on some notions of input order. They considered a variant of the problem where the output is the same size of the input, but such that the \textsc{Convex Hull} can be checked and extracted in linear time from this output. In this context, they describe adaptive results directly inspired from disorder measures introduced for the study of adaptive algorithms for \textsc{Sorting Permutations}, such as \texttt{Runs} and \texttt{Inv}~\cite{1992-ACJ-AnOverviewOfAdaptiveSorting-MoffatPetersson,1992-ACMCS-ASurveyOfAdaptiveSortingAlgorithms-EstivillCastroWood}.
\end{LONG}

Inspired by Ahn and Okamoto's definition~\cite{2011-IEICE-AdaptiveAlgorithmsForPlanarConvexHullProblems-AhnOkamoto}, we define some simple measure of \emph{input order} for the computation of \textsc{Maxima Sets} in Section~\ref{sec:inputOrderAdaptivePlanarMaxima}, and we slightly refine Levcopoulos et al.'s analysis~\cite{2002-SWAT-AdaptiveAlgorithmsForConstructingConvexHullsAndTriangulationsOfPolygonalChains-LevcopoulosLingasMitchell} for the computation of \textsc{Convex Hulls} in Section~\ref{sec:inputOrderAdaptiveConvexHull}.

\subsection{Synergistic Solutions}
\label{sec:synerg}

Inspired by previous results on sorting multisets in a way adaptive to the frequencies of the element~\cite{1976-JComp-SortingAndSearchingInMultisets-MunroSpira} on one hand, and on sorting permutation in a way adaptive to the distribution of the lengths of the subsequences of consecutive positions already sorted~\cite{2009-Chapter-PartialSolutionAndEntropy-Takaoka} on the other hand, Barbay~et al.~\cite{2016-ARXIV-SynergisticSortingAndDeferredDataStructuresOnMultiSets-BarbayOchoaSatty} described two ``synergistic'' algorithms \textsc{Sorting Multisets}, which take advantage both of the input structure and of the input order, in such a way that each of their solutions is never asymptotically slower than other solutions taking advantage of a subset of these features, but also so that on some family of instances, each of their solution performs an order of magnitude faster than any solution taking advantage of only a subset of those features. They left open the generalization of their results to higher dimensions. We generalize their results to dimension 2, for the computation of the \textsc{Maxima Set} in Section~\ref{sec:maxima}, and for the computation of the \textsc{Convex Hull} in Section~\ref{sec:convex}.
\end{BACK}

\section{Maxima Set}
\label{sec:maxima}

Given a point $p\in\mathbb{R}^2$, let $p_x$ and $p_y$ denote the $x$- and $y$-coordinates of $p$, respectively. Given two points $p$ and $q$, $p$ \emph{dominates} $q$ if $p_x \ge q_x$ and $p_y \ge q_y$.  Given a set $\mathcal{S}$ of points in $d$ dimensions, a point $p$ from $\mathcal{S}$ is called \emph{maximal} if none of the other points of $\mathcal{S}$ dominates $p$.  The \textsc{Maxima Set} of such a set $\mathcal{S}$ is the uniquely defined set of all maximal points~\cite{1975-JACM-OnFindingTheMaximaOfASetOfVectors-KungLuccioPreparata}.  Kirkpatrick and Seidel~\cite{1985-SOCG-OutputSizeSensitiveAlgorithmsForFindingMaximalVectors-KirkpatrickSeidel} described an algorithm that computes the \textsc{Maxima Set} running in time within $O(n\log h)$, where $n$ is the number of input points, and $h$ is the number of points in the \textsc{Maxima set} (i.e., the size of the output).
In 2009, Afshani et al.~\cite{2009-FOCS-InstanceOptimalGeometricAlgorithms-AfshaniBarbayChan} improved the results on the computation of both the \textsc{Maxima Set} and \textsc{Convex Hull}
\begin{LONG}
  in dimension 2 and 3
\end{LONG}
by taking the best advantage of the relative positions of the points (while ignoring the input order).
\begin{INUTILE}
  describing ``input order oblivious instance optimal'' algorithms which, for any instance $I$, performs within a constant factor of the performance of any algorithm that does not take advantage of the order of the points in the algebraic decision tree model.
\end{INUTILE}

\begin{TODO}
CHECK that I am correctly referencing the computational model.
\end{TODO}

If the points are sorted by their coordinates (say, in lexicographic order of their coordinates for a fixed order of the dimensions), the \textsc{Maxima Set} can be computed in time linear in the size of the input. Refining this insight, we show in Section~\ref{sec:inputOrderAdaptivePlanarMaxima} that one can take advantage of the input order even if it is not as strictly sorted: this presents a result which is orthogonal to previous input structure adaptive results~\cite{1985-SOCG-OutputSizeSensitiveAlgorithmsForFindingMaximalVectors-KirkpatrickSeidel,2009-FOCS-InstanceOptimalGeometricAlgorithms-AfshaniBarbayChan}.  In order to combine these results synergistically with previous input structure adaptive results~\cite{1985-SOCG-OutputSizeSensitiveAlgorithmsForFindingMaximalVectors-KirkpatrickSeidel,2009-FOCS-InstanceOptimalGeometricAlgorithms-AfshaniBarbayChan}, we study in Section~\ref{sec:quick-maxima} an algorithm that solves the problem of \textsc{Merging Maxima}, which asks for computing the \textsc{Maxima Set} of the union of maxima sequences, in such a way that it outperforms both input structure adaptive results by taking advantage of the number and sizes of the maxima sequences and of the relative positions between the points in them. Last, we combine those results into a single synergistic algorithm, which decomposes the input sequence of points into several ``easy'' subsequences for which the corresponding \textsc{Maxima Set} can be computed in linear time, and then proceeds to merge them.  The resulting algorithm not only outperforms previous input structure adaptive results~\cite{1985-SOCG-OutputSizeSensitiveAlgorithmsForFindingMaximalVectors-KirkpatrickSeidel,2009-FOCS-InstanceOptimalGeometricAlgorithms-AfshaniBarbayChan} in the sense that it never performs (asymptotically) worse, and performs better when it can take advantage of the input order, it also outperforms a dovetailing combination of previous input structure adaptive algorithms~\cite{1985-SOCG-OutputSizeSensitiveAlgorithmsForFindingMaximalVectors-KirkpatrickSeidel,2009-FOCS-InstanceOptimalGeometricAlgorithms-AfshaniBarbayChan} and the input order adaptive algorithm described in Section~\ref{sec:inputOrderAdaptivePlanarMaxima}.

\subsection{Input Order Adaptive  Maxima Set}
\label{sec:inputOrderAdaptivePlanarMaxima}

In many cases the \textsc{Maxima Set} can be computed in time linear in the size of the input, independently from the size of the \textsc{Maxima Set} itself.
For instance, consider an order of the input where
(1) the maximal points are given in order sorted by one coordinates, and
(2) for each maximal point $p$, all the points dominated by $p$ are given immediately after $p$ in the input order (in any relative order). 
The \textsc{Maxima Set} of a sequence of points given in this order can be extracted and validated in linear time by a simple greedy algorithm, which throws an exception if the input is not in such an order.
Each of the various ways to deal with such exceptions directly yields an input order adaptive algorithm~\cite{2011-IEICE-AdaptiveAlgorithmsForPlanarConvexHullProblems-AhnOkamoto}.
\begin{LONG}
  For instance, if the point found to be out of order is \emph{inserted} in the partial \textsc{Maxima Set} computed up to this point, this yields an algorithm running in time within $O(n\lg\mathtt{Inv})$ where $\mathtt{Inv}$ is the sum of such insertion costs.
\end{LONG}

Let's label such a sequence ``\emph{smooth}'', and by extension any input subsequence of consecutive positions which have the same property.
Given an input sequence $\mathcal{S}$, let $\sigma$ denote
\begin{INUTILE}
  let's call its \emph{smoothness}
\end{INUTILE}
the minimal number of smooth subsequences into which it can be decomposed.
\begin{INUTILE}
  , and denote it by $\sigma$
\end{INUTILE}
Most interestingly for synergistic purpose, such a decomposition can be computed in time linear in the input size.
Detecting such $\sigma$ smooth subsequences and merging them two by two yields an algorithm running in time within $O(n(1+\log\sigma))$.  Such a result is orthogonal to previous input structure adaptive results~\cite{1985-SOCG-OutputSizeSensitiveAlgorithmsForFindingMaximalVectors-KirkpatrickSeidel, 2009-FOCS-InstanceOptimalGeometricAlgorithms-AfshaniBarbayChan}: it can be worse than $O(n\log h)$ when the output size $h$ is small and the input is in a ``bad'' order, as it can be much better than $O(n\log h)$ when $h$ is large and the input is in a ``good'' order. We show in the next two sections an algorithm which outperforms both.

\subsection{Union of Maxima Sequences}
\label{sec:quick-maxima}

We describe the \texttt{Quick Union Maxima} algorithm, which computes the \textsc{Maxima Set} of the union of maxima sequences in the plane, assuming that the points in the maxima sequences are given in sorted order by their $x$-coordinates (i.e., \textsc{Merging Maxima}). This algorithm generalizes Barbay et al.'s~\cite{2016-ARXIV-SynergisticSortingAndDeferredDataStructuresOnMultiSets-BarbayOchoaSatty} \texttt{QuickSort} inspired algorithm for \textsc{Merging Multiset} and is a building block towards the synergistic algorithm for computing the \textsc{Maxima Set} of a set of planar points described in Section~\ref{sec:synergMaxima}. Given a maxima sequence $\mathcal{M}_i$, let $\mathcal{M}_i[a]$ and $\mathcal{M}_i[b..c]$ denote the $a$-th point and the block of consecutive $c-b+1$ points corresponding to positions from $b$ to $c$ in $\mathcal{M}_i$, respectively.
\begin{LONG}
  As its name indicates, the algorithm is inspired by the \texttt{QuickSort} algorithm.
\end{LONG}

\subsubsection{Description of the algorithm Quick Union Maxima.}
The \texttt{Quick Union Maxima} algorithm chooses a point $p$ that forms part of the \textsc{Maxima Set} of the union, and discards all the points dominated by $p$.
\begin{LONG}
  Note that all dominated points do not belong to the maxima that contains $p$.
\end{LONG}
The selection of $p$ ensures that at least half of the maxima sequences will have points dominated by $p$ or to the right of $p$ and at least half of the maxima sequences will have points dominated by $p$ or to the left of $p$. The algorithm identifies a block $\mathcal{B}$ of consecutive points in the maxima sequence that contains $p$, which forms part of the output \textsc{Maxima Set} ($p$ is contained in $\mathcal{B}$).  All the points in $\mathcal{B}$ are discarded.
\begin{LONG}
  Note that if the points of a maxima sequence in the plane are sorted in ascending order by their $x$-coordinates, then their $y$-coordinates are sorted in decreasing order. This algorithm takes advantage of this fact when it discards points.
\end{LONG}
All discarded points are identified by doubling searches~\cite{1976-IPL-AnAlmostOptimalAlgorithmForUnboundedSearching-BentleyYao}\begin{LONG}
  inside the maxima sequences
\end{LONG}. The algorithm then recurses separately on the non-discarded points to the left of $p$ and on the non-discarded points to the right of $p$. (See Algorithm~\ref{alg:qum} for a more formal description.)
Next, we analyze the time complexity of the \texttt{Quick Union Maxima} algorithm.

\begin{algorithm} 
  \caption{\texttt{Quick Union Maxima}} 
  \label{alg:qum} 
  \begin{algorithmic}[1] 

    \REQUIRE{A set $\mathcal{M}_1, \dots, \mathcal{M}_\rho$ of $\rho$ maxima sequences}
    \ENSURE{The \textsc{Maxima Set} of the union of $\mathcal{M}_1, \dots, \mathcal{M}_\rho$}
    
    \STATE Compute the median $\mu$ of the $x$-coordinates of the
    middle points of the maxima sequences;
    
    \STATE Perform doubling searches for the value $\mu$ in the
    $x$-coordinates of the points of all maxima sequences,
    starting at both ends of the maxima sequences in parallel;
    
    \STATE Find the point $p$ of maximum $y$-coordinate among the points
    $q$ such as $q_x \ge \mu$ in all maxima sequences, note $j\in[1..\rho]$
    the index of the maxima sequence containing $p$;
    
    \STATE Discard all points dominated by $p$ through doubling
    searches for the values of $p_x$ and $p_y$ in the $x$- and $y$-coordinates
    of all maxima, respectively, except $\mathcal{M}_j$ (Search for $p_x$ in the
    points $q$ such that $q_x \ge \mu$ and for $p_y$ in the points $q$ such that $q_x<\mu$);
    
    \STATE Find the point $r$ of maximum $y$-coordinate among the
    points $q$ such that $q_x > p_x$ in all maxima sequences except $\mathcal{M}_j$
    and find the point $\ell$ of maximum $x$-coordinate among the points $q$ such that
    $q_y > p_y$ in all maxima sequences except $\mathcal{M}_j$;

    \STATE Discard the block in $\mathcal{M}_j$ containing $p$ that forms part of the
    output through doubling searches for the values of $\ell_x$ and $r_y$ in the $x$- and
    $y$-coordinates of the points in $\mathcal{M}_j$, respectively. (Search for $\ell_x$ in the points
    $q$ such that $q_x < p_x$ and for $r_y$ in the points $q$ such that $q_x > p_x$.);
    
    \STATE Recurse separately on the non-discarded points left and right of $p$.

  \end{algorithmic}
\end{algorithm}

\subsubsection{Complexity Analysis of Quick Union Maxima.}
\label{sec:analysis-qum}


Every algorithm for \textsc{Merging Maxima} needs to certify that blocks of consecutive points in the maxima sequences are dominated or are in the \textsc{Maxima Set} of the union. In the following we formalize the notion of a \emph{certificate}, that permits to check the correctness of the output in less time than to recompute the output itself. We define a ``language'' of basic ``arguments'' for such certificates: \emph{domination} (which discards points from the input) and \emph{maximality} (which justify the presence of points in the output) arguments, and their key positions in the instance. A certificate will be verified by checking each of its arguments: those can be checked in time proportional to the number of blocks in them.

\begin{LONG}
  \begin{definition}
    $\langle \mathcal{M}_i[a] \supset \mathcal{M}_j[b..c] \rangle$
    is an \emph{elementary domination argument} if the point
    $\mathcal{M}_i[a]$ dominates all the points in the block
    $\mathcal{M}_j[b..c]$.
  \end{definition}
\end{LONG}

\begin{definition}
  $\langle \mathcal{M}_i[a] \supset \mathcal{M}_{j_1}[b_1..c_1],
  \dots, \mathcal{M}_{j_t}[b_t..c_t] \rangle$ is a \emph{Domination
    Argument} if the point $\mathcal{M}_i[a]$ dominates all the points
  in the blocks
  $\mathcal{M}_{j_1}[b_1..c_1], \dots,
  \mathcal{M}_{j_t}[b_t..c_t]$.
\end{definition}

\begin{LONG}
  \begin{lemma}
    A \emph{domination argument} $\langle \mathcal{M}_i[a] \supset \mathcal{M}_{j_1}[b_1..c_1], \dots, \mathcal{M}_{j_t}[b_t..c_t] \rangle$ can be checked in $O(t)$ data comparisons.
  \end{lemma}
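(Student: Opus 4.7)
The plan is to exploit the monotonicity intrinsic to maxima sequences: when the points of $\mathcal{M}_{j_k}$ are sorted in increasing order by their $x$-coordinates, their $y$-coordinates are automatically sorted in decreasing order (otherwise two points in $\mathcal{M}_{j_k}$ would dominate each other, contradicting the fact that $\mathcal{M}_{j_k}$ is a maxima sequence). Consequently the block $\mathcal{M}_{j_k}[b_k..c_k]$ has a unique point of maximum $x$-coordinate, namely $\mathcal{M}_{j_k}[c_k]$, and a unique point of maximum $y$-coordinate, namely $\mathcal{M}_{j_k}[b_k]$.

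Building on this observation, I would prove that for each $k\in[1..t]$, two comparisons suffice to certify that $\mathcal{M}_i[a]$ dominates every point of the block: check that $(\mathcal{M}_i[a])_x \geq (\mathcal{M}_{j_k}[c_k])_x$ and that $(\mathcal{M}_i[a])_y \geq (\mathcal{M}_{j_k}[b_k])_y$. If both hold, then for any index $e\in[b_k..c_k]$ monotonicity yields $(\mathcal{M}_{j_k}[e])_x \leq (\mathcal{M}_{j_k}[c_k])_x \leq (\mathcal{M}_i[a])_x$ and $(\mathcal{M}_{j_k}[e])_y \leq (\mathcal{M}_{j_k}[b_k])_y \leq (\mathcal{M}_i[a])_y$, so $\mathcal{M}_i[a]$ dominates $\mathcal{M}_{j_k}[e]$. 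Conversely, if the domination argument is valid, then these two inequalities hold by definition, so the check is both sound and complete.

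Summing over the $t$ blocks, the verifier performs exactly $2t$ data comparisons, which is in $O(t)$ as claimed. I would not need to read or touch the interior positions $b_k{<}e{<}c_k$; the algorithm only inspects the two endpoints of each block plus the certifying point $\mathcal{M}_i[a]$, which also matches the informal description given earlier that a certificate is checked ``in time proportional to the number of blocks in it''.

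There is no real obstacle here: the only subtlety is to state explicitly that a maxima sequence sorted by $x$-coordinate is automatically sorted in the opposite order by $y$-coordinate, and to invoke it to argue that testing the two extreme points of a block is sufficient to certify domination over the entire block.
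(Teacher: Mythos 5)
Your proof is correct and matches the argument the paper intends: the paper does not spell out a proof for this lemma, but it explicitly notes that a maxima sequence sorted by increasing $x$-coordinate has decreasing $y$-coordinates and that each argument ``can be checked in time proportional to the number of blocks'' in it, which is exactly your two-comparisons-per-block check against the extreme points $\mathcal{M}_{j_k}[b_k]$ and $\mathcal{M}_{j_k}[c_k]$. Nothing is missing.
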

\end{LONG}

It is not enough to eliminate all points that can not participate in the output. Certifying would still require additional work: a correct algorithm must justify the optimality of its output. To this end we define maximality arguments.

\begin{definition}
  $\langle \mathcal{M}_i[a..b] \dashv \mathcal{M}_{j_1}[a_1..b_1],
  \dots, \mathcal{M}_{j_t}[a_t..b_t] \rangle$ is a \emph{Maximality
    Argument} if \textbf{either} the points $\mathcal{M}_i[b]$
  dominates the points
  $\mathcal{M}_{j_1}[a_1], \dots, \mathcal{M}_{j_t}[a_t]$ and
  the $x$-coordinates of the points
  $\mathcal{M}_{j_1}[a_1-1], \dots, \mathcal{M}_{j_t}[a_t-1]$ are
  less than the $x$-coordinate of the point ${\cal M}_i[a]$
  \textbf{or} the point $\mathcal{M}_i[a]$ dominates the points
  $\mathcal{M}_{j_1}[b_1], \dots, \mathcal{M}_{j_t}[b_t]$ and
  the $y$-coordinates of the points
  $\mathcal{M}_{j_1}[b_1+1], \dots, \mathcal{M}_{j_t}[b_t+1]$ are
  less than the $y$-coordinate of the point ${\cal M}_i[b]$.
\end{definition}

If $\langle \mathcal{M}_i[a..b] \dashv \mathcal{M}_{j_1}[a_1..b_1], \dots, \mathcal{M}_{j_t}[a_t..b_t] \rangle$ is a valid \emph{maximality argument}, then the points in the block $\mathcal{M}_i[a..b]$ are maximal among the maxima sequences $\mathcal{M}_i, \mathcal{M}_{j_1}, \dots, \mathcal{M}_{j_t}$.
\begin{LONG}
  \begin{lemma}
    A \emph{maximality argument} $\langle \mathcal{M}_i[a..b] \dashv \mathcal{M}_{k_1}[a_1..b_1], \dots, \mathcal{M}_{k_t}[a_t..b_t] \rangle$ can be checked in $O(t)$ data comparisons.
  \end{lemma}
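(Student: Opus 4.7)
The plan is to give an explicit verification procedure whose comparison count is $O(t)$. A maximality argument is, by definition, a disjunction of two symmetric conditions, so I would describe a checker that attempts to verify the first disjunct and, if it fails, tries the second; reporting success as soon as either succeeds and failure only if both fail. Since the two branches are symmetric (interchanging the roles of the $x$- and $y$-coordinates, and of the endpoints $a$ and $b$), it suffices to analyze the first branch.

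For the first disjunct, the checker iterates once through the $t$ indices $s\in\{1,\dots,t\}$. For each $s$, it performs a constant number of coordinate comparisons: two comparisons to test whether $\mathcal{M}_i[b]$ dominates $\mathcal{M}_{j_s}[a_s]$ (one on $x$, one on $y$), and one comparison to test whether the $x$-coordinate of $\mathcal{M}_{j_s}[a_s-1]$ is strictly less than that of $\mathcal{M}_i[a]$ (or skipping this when $a_s=1$, i.e., there is no predecessor). If any of these tests fails at some index $s$, the branch fails; otherwise it succeeds. The second disjunct is checked analogously, swapping $a\leftrightarrow b$, $a_s\leftrightarrow b_s$, predecessor $\leftrightarrow$ successor, and $x\leftrightarrow y$. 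In either branch, at most $3t$ comparisons are performed, for a total of $O(t)$ comparisons in the worst case.

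Correctness of the check is immediate from the definition: what is being tested is precisely the conjunction of the conditions appearing in the statement of either disjunct, evaluated index by index. The only mildly delicate point is the handling of boundary cases (when some $a_s=1$ or some $b_s$ equals the length of $\mathcal{M}_{j_s}$), where the ``predecessor'' or ``successor'' point does not exist; in that case the corresponding inequality is trivially considered to hold, since no point of $\mathcal{M}_{j_s}$ lies outside the block in the relevant direction. I do not anticipate a real obstacle here, as the claim is essentially a restatement of the definition together with a counting argument; the only thing worth being explicit about is that we use the usual unit-cost model for coordinate comparisons, so that a single $2$D domination test counts as $O(1)$ comparisons.
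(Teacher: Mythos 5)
Your proposal is correct and matches the paper's (implicit) treatment: the paper states this lemma without a separate proof, relying on exactly the observation you make, namely that each of the $t$ blocks contributes a constant number of coordinate comparisons (a domination test plus one boundary comparison) in whichever disjunct of the definition is being verified. Your explicit handling of the boundary cases $a_s=1$ and $b_s$ at the end of $\mathcal{M}_{j_s}$ is a reasonable clarification and does not change the $O(t)$ count.
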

\end{LONG}
The difficulty of finding and describing domination and maximality arguments depend on the points they refer to in the maxima sequences, a notion captured by ``argument points'':

\begin{definition}
  Given an argument
  ${\cal A} = \langle \mathcal{M}_i[a] \supset
  \mathcal{M}_{j_1}[b_1..c_1], \dots, \mathcal{M}_{j_t}[b_t..c_t]
  \rangle$ or
  ${\cal B} = \langle \mathcal{M}_i[a..b] \dashv
  \mathcal{M}_{j_1}[a_1..b_1], \dots, \mathcal{M}_{j_t}[a_t..b_t]
  \rangle$, the \emph{Argument Points} are the points $\mathcal{M}_i[a]$ in
  ${\cal A}$ and $\mathcal{M}_i[a]$ and $\mathcal{M}_i[b]$ in
  ${\cal B}$.
\end{definition}

Those atomic arguments combine into a general definition of a certificate that any correct algorithm for \textsc{Merging Maxima} in the comparison model can be modified to output.

\begin{definition}
  Given a set of maxima sequences and their \textsc{Maxima Set} $\mathcal{M}$ expressed as several blocks on the maxima sequences. A \emph{certificate} of $\mathcal{M}$ is a set of domination and maximality arguments such that the \textsc{Maxima Set} of any instance satisfying those arguments is given by the description of $\mathcal{M}$. The length of a certificate is the number of distinct argument points in it.
\end{definition}

\begin{INUTILE}
  \begin{definition}
    Given an instance ${\cal I}$ of the \textsc{Union Maxima} problem,
    a certificate ${\cal C}$ for ${\cal I}$ of \emph{minimal length} is
    a certificate with the minimal number of distinct argument points.
  \end{definition}
\end{INUTILE}

\begin{LONG}
  We divide the analysis of the time complexity of the \texttt{Quick Union Maxima} algorithm into two lemmas.
\end{LONG}
\begin{LONG}
  We first bound the cumulated time complexity of the doubling searches for the value of the median $\mu$ of the $x$-coordinates of the middle points of the maxima (i.e., step $2$ of Algorithm~\ref{alg:qum}) and the doubling searches in the points discard steps (i.e., steps $4$ and $6$ of Algorithm~\ref{alg:qum}).
\end{LONG}
The algorithm partitions the maxima sequences into blocks of consecutive discarded points, where each block is discarded because it is dominated, or because this block forms part of the \textsc{Maxima Set} of the union. Each block forms part of some argument of the certificate computed by the algorithm.
\begin{SHORT}
  This fact is key to bound the cumulated time complexity of the steps $2, 4$ and $6$ of Algorithm~\ref{alg:qum}. The key fact to bound the time complexity of the steps $1,3$ and $5$ of Algorithm~\ref{alg:qum} is that each execution of these steps has time complexity bounded by the number of maxima sequences in the subinstance.\end{SHORT}
\begin{VLONG}
  \begin{lemma}\label{lem:blocks}
    Let $s_{1}, \dots, s_{\beta}$ be the sizes of the $\beta$ blocks into which the algorithm \texttt{Quick Union Maxima} divides the maxima sequence ${\cal M}_i$. The cumulated time complexity of the doubling searches for the value of the medians $\mu$ of the $x$-coordinates of the middle points of the maxima sequences (i.e., step $2$) and the doubling searches in the points discard steps (i.e., steps $4$ and $6$) of the algorithm \texttt{Quick Union Maxima} in the maxima sequence ${\cal M}_i$ is within $O(\sum_{j=1}^{\beta}\log{s_{j}})$.
  \end{lemma}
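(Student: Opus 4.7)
\begin{PROOF}
\noindent\textbf{Proof plan for Lemma~\ref{lem:blocks}.} The plan is to charge the cost of every doubling search performed on $\mathcal{M}_i$ across all recursive calls to a single block of the final partition of $\mathcal{M}_i$, and then invoke the standard $O(\log s)$ bound on a doubling (galloping) search that locates a boundary at distance~$s$ from its starting point. The cost of the whole partitioning process will then be bounded by $\sum_{j=1}^{\beta}\log s_j$.

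First, I would observe that every doubling search performed in steps~$2$, $4$ and $6$ of Algorithm~\ref{alg:qum} is launched from a point of $\mathcal{M}_i$ whose position is either an endpoint of $\mathcal{M}_i$ or a boundary that was already established by a previous search (the median pivot of a previous recursive call, or one of the endpoints $p_x,p_y,\ell_x,r_y$ produced earlier). Hence the ``reference'' positions of every doubling search on $\mathcal{M}_i$ coincide with the endpoints of blocks in the final partition $s_1,\dots,s_\beta$. A doubling search that walks a distance $d$ from one such reference position before certifying a boundary runs in $O(\log d)$ comparisons; since the distance $d$ is by construction at most the size $s_j$ of the block whose boundary is being located, each individual search costs $O(\log s_j)$ for some $j$.

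Next I would take care of the double-ended median search in step~$2$: there the doubling proceeds from both ends of the remaining portion of $\mathcal{M}_i$ in parallel, so it stops after $O(\log\min(d_1,d_2))$ comparisons, where $d_1$ and $d_2$ are the distances from either end of the current subinstance to the position of $\mu$. The two subsegments of sizes $d_1$ and $d_2$ are further subdivided in the two recursive subcalls on the points to the left and to the right of $p$, so each of them will eventually appear as a concatenation of final blocks; in particular $\min(d_1,d_2)$ is upper bounded by the size of at least one final block $s_j$, which again gives a cost of $O(\log s_j)$. Steps~$4$ and~$6$ are even simpler: each doubling search there certifies the end of a block that is being discarded (dominated by $p$ or forming part of the output inside $\mathcal{M}_j$), so each such search is directly charged to a single block of size $s_j$ at cost $O(\log s_j)$.

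Finally I would argue that the total charge to any one block $s_j$ is $O(1)$: by construction of the recursion, each block is produced exactly once, by the search that delimits its two endpoints, and once a block is produced it is discarded and never touched again. Summing the individual $O(\log s_j)$ costs over all $\beta$ blocks yields the bound $O\bigl(\sum_{j=1}^{\beta}\log s_j\bigr)$. The main obstacle in this plan is the careful bookkeeping for step~$2$, where the doubling search is two-sided and where the pivot location $\mu$ does not immediately correspond to a block boundary of the current call but only to block boundaries created in the recursive subcalls; handling this through the $\min(d_1,d_2)$ bound, rather than the larger of the two distances, is what makes the charging scheme work.
\end{PROOF}
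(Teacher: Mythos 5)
Your charges for steps 4 and 6 are fine: those doubling searches start at a position inside (or at the boundary of) the very block whose endpoints they locate, so each costs $O(\log s_b)$ for that block and each block is delimited once. The genuine gap is in your treatment of step 2, precisely the point you flag as the main obstacle. The claim that $\min(d_1,d_2)$ is upper bounded by the size of at least one final block is false in general: the smaller side of the split is the \emph{sum} of the sizes of all the blocks that the recursion will later carve out of it, and nothing prevents all of those blocks from being small. Concretely, if the median position falls in the middle of a current subinstance of ${\cal M}_i$ of size $m$, and the left half is subsequently cut into $m/4$ blocks of size $2$, then this single search costs $\Theta(\log m)$ while every block you could charge it to satisfies $\log s_j=O(1)$; in a balanced recursion this happens at every level, with per-search costs $\log(m/2),\log(m/4),\dots$, so no scheme that charges each search to one block and each block $O(1)$ times can account for the total.

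The lemma survives because the $\log\min(d_1,d_2)$ terms telescope over the recursion, and this is exactly how the paper argues: it represents the successive splits of ${\cal M}_i$ as a binary tree whose internal nodes carry the discarded blocks and whose leaves carry the not-yet-processed segments, and proves by induction that the accumulated cost stays within a constant factor (eight) of the sum of the logarithms of the node sizes, using that a step of cost $O(\log s_b+\log\min(w,z))$, with $w\le z$ the sizes of the two new leaves and $s_b$ the discarded block, increases that sum by roughly $\log w+\log z+\log s_b-\log(w+z+s_b)$, which is enough to pay for the step. To repair your proof you would need to replace the single-block charge for step 2 by such an amortization (equivalently, spread the $\log\min(d_1,d_2)$ cost over the many blocks of the smaller side), rather than the $O(1)$-charges-per-block bookkeeping you propose; the direct charging alone only covers steps 4 and 6.
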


  \begin{PROOF}
    \begin{proof}
        Every time the algorithm finds the insertion rank of one of the medians $\mu$ of the $x$-coordinates of the middle points of the maxima sequences in $\mathcal{M}_i$, it finds a position $d$ inside a blocks whose points will be discarded. The discard points steps that search for the insertion rank of $p_x$ and $p_y$ start the search in the position $d$. The time complexity of both discard points steps are bounded by $O(\log{s_b})$, where $s_b$ is the size of the discarded block $b$. Both discard points steps partition $\mathcal{M}_i$ in positions separating the blocks to the left of $b$, the block $b$ itself and the blocks to the right of $b$.

The combination of the doubling search that finds the insertion rank of $\mu$ in the $x$-coordinates of $\mathcal{M}_i$ with the doubling searches that discard points
\begin{LONG}
  starting in $d$
\end{LONG}
can be represented as a tree. Each internal node has two children, which correspond to the two subproblems into which the recursive steps partition $\mathcal{M}_i$\begin{LONG}
  , the blocks to the left of $b$ and the blocks to the right of $b$
\end{LONG}. The cost of this combination is bounded by $O(\log s_b + \log s)$, where $s$ is the minimum between the sum of the sizes of the blocks to the left of $b$ and the sum of the sizes of the blocks to the right of $b$, because of the two doubling searches in parallel. The size of each internal node is the size of the block discarded in this step. The size of each leaf is the sum of the sizes of the blocks in the child subproblem represented by this leaf.

We prove that at each combination of steps, the total cost is bounded by eight times of the sum of the logarithms of the sizes of the nodes in the tree. This is done by induction over the number of steps. If the number of steps is zero then there is no cost. For the inductive step, if the number of steps increases by one, then a new combination of steps is done and a leaf subproblem is partitioned into two new subproblems. At this step, a leaf of the tree is transformed into an internal node and two new leaves are created. Let $w$ and $z$ such that $w \leq z$ be the sizes of the new leaves created. Note that $w$ and $z$ are the sum of the sizes of the blocks to the left and to the right, respectively, of the discarded block $b$ in this step. The cost of this step is less than $4\log{w} + 4\log{b}$. The cost of all the steps then increases by $4\log{w} + 4\log{b}$, and hence eight times the sum of the logarithms of the nodes in the tree increases by $8(\log{w} + \log{z} + \log b - \log({w+z+b}))$. But if $w \ge 3$, $w \ge b$ and $w \le z$ then the result follows. \qed
    \end{proof}
  \end{PROOF}
\end{VLONG}
\begin{INUTILE}
  The algorithm uses the points $p$, $\ell$ and $r$ to discard points
  from the input maxima. The discarded points could be part of the
  output or dominated points. We call the points $p$, $\ell$ and $r$
  \emph{argument points}.

  The number of argument points that the algorithm \texttt{Quick Union
    Maxima} uses to discard points is asymptotically the minimum
  number of argument points that any other algorithm that computes the
  union of $\rho$ maxima needs to use to discard points from the input
  maxima.

  We formalize this notion by comparing the number of \emph{argument
    points} that the \texttt{Quick Union Maxima} uses to discard
  points with an algorithm that computes the minimum number needed of
  \emph{argument points} when it computes the union of $\rho$ maxima.
\end{INUTILE}
\begin{LONG}
  We bound next the time complexity of the steps that compute the median $\mu$ of the $x$-coordinates of the middles points in the maxima (i.e., step $1$ of Algorithm~\ref{alg:qum}) and the steps that find the points $p$, $\ell$ and $r$ (i.e., steps $3$ and $5$ of Algorithm~\ref{alg:qum}) in the \texttt{Quick Union Maxima} algorithm.
  Note that one execution of these steps has time complexity bounded by the number of maxima sequences in the sub-instance.
\end{LONG}The partition of the maxima sequences by the $x$-coordinate of $p$ and the discarded points decrease the number of maxima sequences in the subinstances.

\begin{LONG}
  \begin{lemma}\label{lem:sequences}
    The cumulated number of comparison performed by the steps that compute the median $\mu$ of the $x$-coordinates of the middles points in the $\rho$ maxima sequences (i.e., step $1$) and the steps that find the points $p$, $\ell$ and $r$ (i.e., steps $3$ and $5$) in the \texttt{Quick Union Maxima} algorithm is within $O(\sum^{\delta}_{i=1}\log{\binom{\rho}{m_i}})$, where $\delta$ is the length of the certificate ${\cal C}$ computed by the algorithm and $m_1, \dots, m_\delta$ is a sequence where $m_i$ is the number of maxima sequences whose blocks form the $i$-th argument of ${\cal C}$.
  \end{lemma}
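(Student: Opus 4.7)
The plan is to relate the cumulated cost of steps 1, 3, and 5 to the structure of the recursion tree of \texttt{Quick Union Maxima}, and then to the arguments of the certificate ${\cal C}$ it produces.

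First, I would bound the cost of a single execution of steps 1, 3, and 5 on a subinstance of $\rho'$ maxima sequences by $O(\rho')$ comparisons: step 1 runs a linear-time median selection on the $\rho'$ middle $x$-coordinates, while steps 3 and 5 perform a single scan over the $\rho'$ sequences to locate the extremal points $p$, $r$, and $\ell$. Consequently, up to a constant factor, the total cost considered by the lemma equals $\sum_{k} \rho'_k$, where $\rho'_k$ denotes the number of sequences active at the $k$-th recursive call.

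Second, I would establish that the recursion tree is balanced. The median $\mu$ together with the choice of $p$ guarantees (as stated in the description of the algorithm) that at least $\lceil \rho'/2 \rceil$ of the sequences have all their remaining points either dominated by $p$ or to the right of $p$, and symmetrically on the left. Hence each of the two subproblems inherits at most $\lceil \rho'/2 \rceil + 1$ of the $\rho'$ sequences (the $+1$ accounts for the sequence containing $p$, which may be split). This implies that the recursion tree has depth $O(\log \rho)$.

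Third, I would conclude by induction on the recursion tree, with hypothesis that the cost of the subtree rooted at a call on $\rho'$ sequences is within $O(\sum_{i \in \mathrm{subtree}} \log \binom{\rho'}{m_i})$. The inductive step reduces to showing that $\rho' = O(\log \binom{\rho'}{m})$, where $m$ is the number of sequences appearing in the argument created at the current call: since the balancing property forces a constant fraction of the $\rho'$ active sequences to contribute blocks to that argument, $m = \Theta(\rho')$ and therefore $\log \binom{\rho'}{m} = \Theta(\rho')$. Because $\binom{\rho}{m} \geq \binom{\rho'}{m}$, the bound lifts from $\binom{\rho'}{m}$ to the claimed $\binom{\rho}{m}$, and summing over the recursion tree yields the desired $O(\sum_{i=1}^{\delta} \log \binom{\rho}{m_i})$.

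The main technical obstacle is verifying rigorously that the argument produced at each recursive call involves a constant fraction of its active sequences. The property ``at least half have dominated or right points'' must be carefully translated into a lower bound on the number of sequences whose blocks are explicitly listed in the newly created domination (step 4) and maximality (step 6) arguments; pathological sequences whose contribution is only resolved deeper in the recursion need to be handled via an amortization that charges their cost against an argument appearing in a descendant call rather than the current one.
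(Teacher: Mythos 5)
There is a genuine gap, and it sits exactly where you flag your ``main technical obstacle.'' Two of your key claims do not hold. First, the recursion is not balanced in the sense you state: a maxima sequence whose points straddle $p$ is \emph{split} and appears in \emph{both} recursive subinstances, so each child can inherit essentially all $\rho'$ sequences, not $\lceil \rho'/2\rceil+1$ of them; the only balance the choice of $\mu$ gives is the weaker condition that the number of sequences surviving only on one side is at most the number surviving on the other side (including the split ones). Second, the charging step ``$m=\Theta(\rho')$ hence $\log\binom{\rho'}{m}=\Theta(\rho')$'' is false at the upper end: if the argument created at the current call involves $m=\rho'$ or $m=\rho'-1$ sequences, then $\log\binom{\rho'}{m}$ is $0$ or $O(\log\rho')$, yet the call still costs $\Theta(\rho')$. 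This is not a pathological case: $p$ can dominate a block in nearly every other sequence while every sequence keeps points on both sides, so the current argument involves almost all sequences, both children keep almost all sequences, and your local charge cannot pay for the call. Moreover, the premise that a constant fraction of the active sequences contribute blocks to the \emph{current} argument is itself unjustified: a sequence may have no point in the slab between $\mu$ and $p_x$ (and no block adjacent to the output block), hence contribute nothing at this call; the median only controls where middle points lie, not which sequences get discarded blocks now.

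The amortization you defer to ``a descendant call'' is precisely the content of the paper's proof, and it is done globally rather than per argument. The paper proves by induction on the recursion the explicit potential bound $\mathcal{T}(\delta,k)\le \sum_{i=1}^{\delta} m_i\log\frac{k}{m_i} - k$, where $k$ is the number of non-empty sequences in the subinstance. Writing $c$, $d$ for the sequences surviving only above $p_y$, respectively only right of $p_x$, and $b$ for those surviving on both sides, the recurrence is $\mathcal{T}(\delta,k)=\mathcal{T}(\delta_c,c+b)+\mathcal{T}(\delta_d,d+b)+k$, and the inductive step closes using three facts: every sequence of a subinstance eventually contributes at least one block to some argument generated there (so $\sum_i m_i \ge c+b$ and $\sum_i m_i\ge d+b$), the balance conditions $c\le d+b$ and $d\le c+b$ coming from the median choice, and the inequality $x\log\bigl(1+\frac{y}{x}\bigr)\ge y$ for $y\le x$. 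In particular the ``$-k$'' term in the potential is what pays the $\Theta(k)$ cost of the current call in the bad case above, spreading the charge over all arguments of the subtree rather than the single argument created now. Your outline would need to be reworked into (or replaced by) such a potential-function induction; as written, the inductive step fails.
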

  
\begin{PROOF}
  \begin{proof}
    
    We prove this lemma by induction over $\delta$ and $\rho$. The time complexity of one of these steps is linear in the number of maxima sequences in the sub-instance (i.e., ignoring all the empty maxima sequences of this sub-instance).

    Let $\mathcal{T}(\delta,k)$ be the cumulated time complexity during the execution of the steps that compute the medians $\mu$ of the $x$-coordinates of the middles points (i.e., step $1$) and during the steps that find the points $p$, $\ell$ and $r$ (i.e., steps $3$ and $5$) in the algorithm \texttt{Quick Union Maxima}.
  We prove that $\mathcal{T}(\delta, k) \le \sum^{\delta}_{i=1}m_i\log{\frac{k}{m_i}} - k$, where $m_i$ is the number of maxima sequences whose blocks form the $i$-th argument of ${\cal C}$.
  Let $\mu$ be the first median of the $x$-coordinates of the middles points of the maxima sequences computed by the algorithm. Let $c$ and $d$ be the number of maxima sequences that have non-discarded point only above of $p_y$ and to the right of $p_x$, respectively. Let $b$ be the number of maxima sequences that have non-discarded points above $p_y$ and to the right of $p_x$. Let $e$ be the number of maxima sequences which all their points are dominated by $p$. Let $\delta_c$ and $\delta_d$ be the number of arguments computed by the algorithm to discard points in the maxima sequences above $p_y$ and to the right of $p_x$, respectively. Then, $\mathcal{T}(\delta, k) = \mathcal{T}(\delta_c, c+b) + \mathcal{T}(\delta_d, d+b) + k$ because of the two recursive calls and the steps $1$, $3$ and $5$ of the algorithm \texttt{Quick Union Maxima}. By Induction Hypothesis, $\mathcal{T}(\delta_c,c+b) \le \sum^{\delta_c}_{i=1}m_i\log{\frac{c+b}{m_i}} - c - b$ and $\mathcal{T}(\delta_d, d+b) \le \sum^{\delta_d}_{i=1}m_i\log{\frac{d+b}{m_i}} - d - b$. In the worst case $e=0$, and we need to prove that $c+d \le \sum^{\delta_c}_{i=1}m_i \log\left({1 + \frac{d}{c+b}}\right) + \sum^{\delta_d}_{i=1}m_i\left({1 + \frac{c}{d+b}}\right)$, but this is a consequence of $\sum^{\delta_c}_{i=1}m_i \ge c+b, \sum^{\delta_d}_{i=1}m_i \ge d+b$ (the number of discarded blocks is greater than or equal to the number of maxima sequences); $c \le d+b, d \le c + b$ (at least $\frac{k}{2}$ maxima sequences are left to the left and to the right of $\mu$); and $\log\left({1 + \frac{y}{x}}\right)^x \ge y$ for $y \le x$.  \qed
\end{proof}
\end{PROOF}

  Combining Lemma~\ref{lem:blocks} and Lemma~\ref{lem:sequences} yields an upper bound on the number of data comparisons performed by the algorithm \texttt{Quick Union Maxima}:
\end{LONG}

\begin{theorem}\label{theo:qum}
  Given $\rho$ maxima sequences. The \texttt{Quick Union Maxima} algorithm performs within $O(\sum_{j=1}^{\beta}\log{s_{j}} + \sum^{\delta}_{i=1}\log{\binom{\rho}{m_i}})$ data comparisons when it computes the \textsc{Maxima Set} of the union of these maxima sequences; where $\beta$ is the number of blocks in the certificate ${\cal C}$ computed by the algorithm; $s_1, \dots, s_\beta$ are the sizes of these blocks; $\delta$ is the length of ${\cal C}$; and $m_1, \dots, m_\delta$ is a sequence where $m_i$ is the number of maxima sequences whose blocks form the $i$-th argument of ${\cal C}$.
  \begin{LONG}
    This number of comparisons is optimal in the worst case over instances formed by $\rho$ maxima sequences that have certificates ${\cal C}$ formed by $\beta$ blocks of sizes $s_1, \dots, s_\beta$ and length $\delta$ such that $m_1, \dots, m_\delta$ is a sequence where $m_i$ is the number of maxima sequences whose blocks form the $i$-th argument of ${\cal C}$.
  \end{LONG}
\end{theorem}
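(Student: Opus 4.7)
The plan is to decompose the total comparison count of Algorithm~\ref{alg:qum} into two additive contributions that match the two summations in the stated bound, and then to bound each contribution separately by induction. Concretely, I would split the work into (a) the doubling searches of steps~2, 4, 6, whose costs depend on the sizes of the blocks they carve out of the individual maxima sequences, and (b) the median-of-middles computation in step~1 together with the extreme-point selections in steps~3 and 5, each of which costs a number of comparisons linear in the number of non-empty maxima sequences remaining in the current sub-instance.

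For (a) I would argue by amortized charging inside each maxima sequence ${\cal M}_i$. Every doubling search of cost $O(\log s)$ either locates the insertion rank of the median $\mu$, or discards a block of size $s_j$ after the extreme points $p$, $\ell$, $r$ have been fixed. I would arrange the successive splits of ${\cal M}_i$ into a binary tree whose internal nodes are the discarded blocks and whose leaves are the surviving fragments sent to the two recursive calls. A straightforward induction on the number of splits, balancing the two doubling searches that proceed in parallel from opposite ends, yields a total cost within a constant factor of $\sum_{j=1}^{\beta} \log s_j$, as claimed.

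For (b) I would prove by induction on $\delta$ and $\rho$ that the cumulated cost of steps~1, 3, 5 is within $O\!\left(\sum_{i=1}^\delta \log \binom{\rho}{m_i}\right)$. The median $\mu$ chosen at step~1 splits the instance into a left and a right sub-instance whose numbers of non-empty maxima sequences $c+b$ and $d+b$ satisfy $c,d \le \rho/2 + b$, where $b$ counts sequences contributing points to both sides. Writing $\mathcal{T}(\delta,\rho)$ for the cumulated cost, the recurrence $\mathcal{T}(\delta,\rho) \le \mathcal{T}(\delta_c,c+b) + \mathcal{T}(\delta_d,d+b) + \rho$ together with the inequality $\sum_i m_i \ge$ (number of blocks produced at that level) reduces the induction step to an entropy-like inequality of the form $\log(1+y/x)^x \ge y$ for $y \le x$. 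This is the core algebraic step, and it is where the lion's share of the work lies.

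The main obstacle is part (b): one has to track simultaneously how $\rho$ shrinks across the two recursive calls and how the arguments of the certificate are split between them, so that the linear overhead $\rho$ of each call is absorbed into the logarithmic contributions $\log \binom{\rho}{m_i}$. Once (a) and (b) are established, the theorem follows by summing the two bounds, since each comparison performed by the algorithm is counted in exactly one of the two categories.
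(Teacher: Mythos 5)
Your proposal follows the paper's own proof essentially step for step: the paper also splits the cost into the doubling-search steps 2, 4, 6 (bounded per maxima sequence by a binary-tree charging argument yielding $O(\sum_{j=1}^{\beta}\log s_j)$, Lemma~\ref{lem:blocks}) and the steps 1, 3, 5 (bounded by induction on $\delta$ and $\rho$ via the recurrence $\mathcal{T}(\delta,k) \le \mathcal{T}(\delta_c,c+b)+\mathcal{T}(\delta_d,d+b)+k$ and the inequality $\log\left(1+\frac{y}{x}\right)^x \ge y$ for $y \le x$, Lemma~\ref{lem:sequences}). So your decomposition, recurrence, and key algebraic step all coincide with the paper's argument, and the plan is correct.
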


The optimality of this algorithm is a consequence of the fact that it checks each arguments of any certificate using a constant number of argument points.
\begin{VLONG}
 We prove next that the \texttt{Quick Union Maxima} algorithm computes a \emph{certificate} which length is a constant factor of the length of the certificate of minimal length. Consider the following algorithm for \textsc{Merging Maxima}. The \texttt{Left-to-Right} algorithm chooses the first points from left to right of each maxima sequence and computes the points $u$ and $v$ of maximum and second maximum $y$-coordinate among these points, respectively. Let $\mathcal{M}_i$ be the maxima sequence that contains $u$. Let $a$ be the index of $u$ in $\mathcal{M}_i$. The $y$-coordinates of the points in the input maxima are sorted in decreasing order from left to right. The algorithm searches then for the insertion rank of $v_y$ in $\mathcal{M}_i$. Let $b$ be the index of the rightmost point $g$ in $\mathcal{M}_i$ such that $g_y > v_y$. The block $\mathcal{M}_i[a..b]$ form part of the \textsc{Maxima Set} of the union and are discarded by the \texttt{Left-to-Right} algorithm. If $g$ dominates $v$, the algorithm discards all points in the input dominated by $g$. The algorithm restarts the computation on the non-discarded points.

\begin{lemma}
The \texttt{Left-to-Right} algorithm computes a certificate of minimal length when it computes the union of $\rho$ maxima sequences.
\end{lemma}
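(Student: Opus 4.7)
The plan is to prove the claim by matching an upper bound on the length of the certificate $\mathcal{C}_{LR}$ produced by \texttt{Left-to-Right} with a lower bound that holds for any valid certificate of the same instance. The key structural fact is that the Maxima Set of the union is uniquely partitioned into $k$ maximal contiguous blocks on the input sequences $\mathcal{M}_1, \ldots, \mathcal{M}_\rho$, and this partition controls both bounds.

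First, I would characterize the arguments emitted by \texttt{Left-to-Right}. At each iteration, the algorithm identifies exactly one output block $\mathcal{M}_i[a..b]$ (where $a$ is the index of $u$ and $b$ is the index of $g$) and emits one maximality argument $\langle \mathcal{M}_i[a..b] \dashv \ldots \rangle$ whose two argument points are $u = \mathcal{M}_i[a]$ and $g = \mathcal{M}_i[b]$. When $g$ dominates $v$, the algorithm additionally emits a domination argument $\langle \mathcal{M}_i[b] \supset \ldots\rangle$, but its unique argument point $g = \mathcal{M}_i[b]$ coincides with a maximality endpoint already accounted for, so it contributes no new argument point. Hence each of the $k$ iterations contributes at most two new argument points (only one in the degenerate case $a = b$), giving $|\mathcal{C}_{LR}| \leq 2k - s$, where $s$ is the number of single-point output blocks.

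Next, I would establish the matching lower bound for any valid certificate $\mathcal{C}$. For every one of the $k$ natural output blocks $B$, the certificate $\mathcal{C}$ must contain at least one maximality argument covering $B$: otherwise, some maximal point of $B$ would not be forced into the output, contradicting the uniqueness of the Maxima Set of the instance. Because a maximality argument's block consists entirely of maximal points, it is necessarily contained in a single natural block, so distinct natural blocks require distinct maximality arguments. Each such argument contributes the two endpoints of its block as argument points (or one, if the block is a single point), and these endpoints are pairwise distinct across different natural blocks because disjoint index ranges in $\mathcal{M}_1, \ldots, \mathcal{M}_\rho$ correspond to distinct points. Thus $|\mathcal{C}| \geq 2k - s$, matching the upper bound.

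The main obstacle is the lower bound step, specifically justifying that a single maximality argument cannot cover two distinct natural blocks simultaneously and that argument points across maximality arguments for different blocks cannot be identified to reduce the count below $2k - s$. Both facts reduce to the definition: arguments refer to contiguous index ranges within a single sequence, and natural blocks are separated by dominated points whose inclusion would violate the maximality condition. Combining the two bounds yields $|\mathcal{C}_{LR}| \leq 2k - s \leq |\mathcal{C}|$ for any valid certificate, establishing that \texttt{Left-to-Right} computes a certificate of minimal length.
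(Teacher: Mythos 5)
Your accounting fails on both sides, and for the same underlying reason: the blocks that \texttt{Left-to-Right} discards are \emph{not} the maximal contiguous output blocks, so the identification ``$k$ iterations, one per natural block'' is unjustified and in fact false. An iteration ends its block at the rightmost point of $\mathcal{M}_i$ whose $y$-coordinate exceeds $v_y$; when $g$ then dominates $v$ and $v$ is discarded, the next iteration can resume \emph{inside the same} natural block of $\mathcal{M}_i$. Concretely, take $\mathcal{M}_1 = (1,10),(2,9),(3,8),(4,7)$ and $\mathcal{M}_2 = (1.5,8.5),(10,1)$: the single natural block $\mathcal{M}_1[1..4]$ is processed in two iterations (first $\mathcal{M}_1[1..2]$ with $v=(1.5,8.5)$, then $\mathcal{M}_1[3..4]$ after $v$ is discarded as dominated by $(2,9)$), so the algorithm spends four argument points on one natural block, while your bound gives $2k-s=3$ for the whole instance. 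The paper does not argue via a global block count at all: its proof is a stepwise greedy-optimality (exchange) argument, namely that $v$ is by construction the point allowing the largest block of consecutive output points containing $u$ to be certified, and $g$ is the rightmost point of $\mathcal{M}_i$ dominating $v$, hence dominates the longest possible block containing $v$; any repair of your upper-bound half would have to charge argument points against an optimal certificate step by step in this spirit, not against the natural blocks.

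The lower-bound half is also not valid: a certificate need not contain a maximality argument covering each natural output block, because the promise that every input sequence is itself a maxima sequence lets arguments about \emph{other} sequences force a block into the output for free. In the instance $\mathcal{M}_1=(1,10),(2,9),(3,8),(4,7)$, $\mathcal{M}_2=(1.5,8.5)$, the single domination argument $\langle \mathcal{M}_1[2] \supset \mathcal{M}_2[1..1]\rangle$ already meets the paper's definition of a certificate: in any instance satisfying it, $\mathcal{M}_2[1]$ is dominated, and $\mathcal{M}_2[1]$ cannot dominate any point of $\mathcal{M}_1$ without, by transitivity, violating the assumption that $\mathcal{M}_1$ is a maxima sequence, so the output of every satisfying instance is exactly $\mathcal{M}_1[1..4]$. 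That certificate has one argument point, below your claimed floor $2k-s=2$, so ``each natural block contributes two fresh argument points to every certificate'' is false. Even if one strengthened the definition of certificate to require an explicit maximality argument per output block, the upper-bound failure above remains, so the two halves of your argument cannot be made to meet as stated.
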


\begin{PROOF}
\begin{proof}
  Let $u$ and $v$ be the points with maximum and second maximum $y$-coordinate among the first points from left to right of the non-discarded points of the maxima sequences. Let $\mathcal{M}_i$ be the maxima sequence that contains $u$. Then, all points in $\mathcal{M}_i$ with $y$-coordinate greater than $v_y$ could be discarded (i.e., form part of the output) and $v$ is the point in the maxima sequences that allows to discard the greatest number of consecutive points including $u$ in $\mathcal{M}_i$. Let $g$ and $h$ be consecutive points in $\mathcal{M}_i$ such that $g_y > v_y > h_y$. If $g$ dominates $v$, then $g$ is the rightmost point in $\mathcal{M}_i$ that dominates $v$. Hence, $g$ is the point in $\mathcal{M}_i$ that dominates the maximum number of consecutive points including $v$ in the maxima sequence that contains $v$. These two arguments are enough to prove that the algorithm computes the minimum number of \emph{argument points}.\qed
\end{proof}
\end{PROOF}

Let $a$ be the number of \emph{argument points} of the certificate computed by the \texttt{Left-to-Right} algorithm. The number of \emph{argument points} of the certificate that the algorithm \texttt{Quick Union Maxima} computes is within a constant factor of $a$.
\end{VLONG}

\begin{lemma}\label{lem:opt-max}
  The algorithm \texttt{Quick Union Maxima} computes a certificate
  which length is a constant factor of the length of the certificate of minimal length.
\end{lemma}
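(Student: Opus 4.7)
The plan is to compare the certificate produced by \texttt{Quick Union Maxima} with the one produced by the \texttt{Left-to-Right} algorithm, whose length was shown to be minimal in the preceding lemma. I will argue that both certificates contain $\Theta(K)$ distinct argument points, where $K$ denotes the number of output blocks induced on the input maxima sequences by the \textsc{Maxima Set} of their union.

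First, I would bound from above the certificate produced by \texttt{Quick Union Maxima}. Each recursive invocation of the algorithm contributes only the three argument points $p$, $\ell$, and $r$ to the certificate: the pivot $p$ is the argument point of the domination argument used to discard points dominated by it, while $\ell$ and $r$ determine the endpoints of the maximality argument that justifies the output block of $\mathcal{M}_j$ containing $p$. Since each invocation identifies exactly one output block before recursing on the non-discarded points to the left and to the right of $p$, the total number of invocations is exactly $K$, so the certificate produced by \texttt{Quick Union Maxima} contains at most $3K$ distinct argument points.

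Second, I would bound from below the length of the minimum certificate in terms of $K$. Any valid certificate must contain, for each of the $K$ output blocks, a maximality argument whose primary block is precisely that output block: without such an argument, the output could be extended within the same maxima sequence without violating any remaining argument. Two maximality arguments justifying distinct output blocks cannot share both of their endpoints, since those endpoints lie inside the respective blocks and the blocks occupy disjoint positions in the input. Hence the minimum certificate has at least $\Omega(K)$ distinct argument points, which by the previous lemma coincides with the length of the certificate produced by \texttt{Left-to-Right}.

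Combining both estimates yields that the certificate produced by \texttt{Quick Union Maxima} has length $O(K) = O(\text{OPT})$, as required. The main obstacle I anticipate is handling the boundary case where two output blocks happen to be adjacent inside the same maxima sequence, so that an endpoint of one coincides with an endpoint of the other; this sharing affects only pairs of consecutive output blocks within a sequence and therefore changes the count by at most a factor of two, which is absorbed into the constant.
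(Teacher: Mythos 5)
Your overall strategy (bound \texttt{Quick Union Maxima}'s certificate from above and the minimal certificate from below by the same parameter) is sound in principle, but the parameter you chose, the number $K$ of output blocks, does not work, and the step ``each invocation identifies exactly one output block, hence the total number of invocations is exactly $K$'' is a genuine gap. The block discarded in Step~6 of Algorithm~\ref{alg:qum} is delimited by $\ell_x$ and $r_y$, where $\ell$ and $r$ are points of \emph{other} maxima sequences that may themselves be dominated (hence interior to the instance, not boundaries of any output block). Consequently a single maximal run of output points in one sequence can be carved into many sub-blocks over many invocations. Concretely, take $\mathcal{M}_1$ entirely maximal and $\mathcal{M}_2$ interleaved with it so that each point of $\mathcal{M}_1$ dominates exactly one point of $\mathcal{M}_2$: here $K=1$, yet \texttt{Quick Union Maxima} performs $\Theta(n)$ invocations and outputs a certificate of length $\Theta(n)$, so your claimed bound of $3K$ argument points is false. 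The lemma itself is not contradicted, because on such instances the minimal certificate is also of length $\Theta(n)$ (it must contain $\Theta(n)$ domination arguments for the points of $\mathcal{M}_2$); this shows that both certificate lengths are governed by the interleaving structure, including the dominated blocks, and not by the number of output blocks, so no comparison through $K$ alone can establish the constant-factor relationship. (Your lower bound $\Omega(K)$ is essentially fine, but it is too weak to conclude.)

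The paper avoids this by comparing \texttt{Quick Union Maxima} not to $K$ but directly to the certificate produced by the \texttt{Left-to-Right} algorithm, which a preceding lemma shows to be of minimal length. The argument is a charging one: when the pivot $p$ of an invocation falls inside a block $b$ that \texttt{Left-to-Right} certifies as output, the points $r$ and $\ell$ computed by \texttt{Quick Union Maxima} relate to the argument points that \texttt{Left-to-Right} uses around $b$ (in particular $\ell$ coincides with the dominating argument point used before $b$ is identified, and $r_y$ lies below the $y$-coordinate of the argument point delimiting the right end of $b$), so \texttt{Quick Union Maxima} discards at least all of $b$ in that invocation using $O(1)$ argument points; its arguments can therefore be charged to the arguments of the minimal certificate. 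If you want to repair your proof, you need this kind of block-to-block comparison with a minimal certificate (or with \texttt{Left-to-Right}'s certificate), not a count of maximal output runs.
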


\begin{PROOF}
  \begin{proof}

      Suppose that there is a block $b$ of consecutive points in a maxima sequence that the \texttt{Left-to-Right} algorithm discards because it identifies that the points in $b$ are in the output. Suppose that the algorithm \texttt{Quick Union Maxima} running in the same input computes a point $p$ ($p$ is the point of maximum $y$-coordinate among the points of $x$-coordinate greater than $\mu$, then $p$ is in the output) that is contained in $b$. Let $r$ be the point of maximum $y$-coordinate among the points with $x$-coordinate greater than $p_x$ computed by the \texttt{Quick Union Algorithm}. Let $h$ be the \emph{argument point} used by the \texttt{Left-to-Right} algorithm to identify the rightmost point in $b$. Hence, $r_y < h_y$. Let $\ell$ be the point of maximum $x$-coordinate among the points with $y$-coordinate greater than $p_y$ computed by the \texttt{Quick Union Maxima} algorithm. Let $u$ be the \emph{argument point} used by the \texttt{Left-to-Right} algorithm to discard dominated points before the identification of $b$. Hence, $\ell$ is the same point as $u$. So, the algorithm \texttt{Quick Union Maxima} discards at least the block $b$ using a constant number of \emph{argument points}. The result follows.\qed
  \end{proof}
\end{PROOF}

\begin{LONG}
  \begin{minipage}[c]{.45\textwidth}
    \centering
    \includegraphics[scale=1]{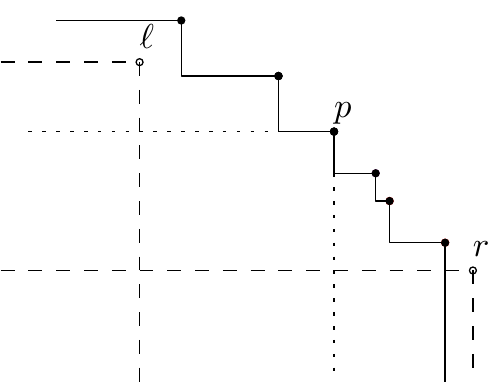}
  \end{minipage}\hfill
  \begin{minipage}[c]{.45\textwidth}
    \captionof{figure}{A representation of a state of the \texttt{Quick Union Algorithm} where the points $p$, $\ell$ and $r$ has been computed.}
    \label{fig:instance}
  \end{minipage}~\\
\end{LONG}

In the following section we describe a synergistic result that combines the results of Sections~\ref{sec:inputOrderAdaptivePlanarMaxima} and~\ref{sec:quick-maxima}. This result introduces the synergistic technique also used in the computation of the \textsc{Convex Hull} in Section~\ref{sec:synergisticUpperHull}.

\subsection{Synergistic Computation of Maxima Sets}
\label{sec:synergMaxima}

The \texttt{(Smooth,Structure) Synergistic Maxima} algorithm decomposes the input of planar points into the minimal number $\sigma$ of smooth subsequences of consecutive positions, computes their maxima sequences and then merges them using the \texttt{Quick Union Maxima} algorithm described in the previous section.

\begin{theorem}\label{theo:syn-max}
  Let $\mathcal{S}$ be a set of points in the plane such that ${\cal S}$ can be divided into $\sigma$ smooth maxima sequences. Let $h$ be the number of points in the \textsc{Maxima Set} of ${\cal S}$. There exists an algorithm that performs within $2n + O(\sum_{j=1}^{\beta}\log{s_{j}} + \sum^{\delta}_{i=1}\log{\binom{\sigma}{m_i}}) \subseteq O(n\log(\min(\sigma, h)))$~\footnote{The quantity $\sum_{j=1}^{\beta}\log{s_{j}}$ is within $O(n)$ but is much smaller for ``easy'' instances.} data comparisons when it computes the \textsc{Maxima Set} of $\mathcal{S}$; where $\beta$ and $s_1, \dots, s_\beta$ are the number and sizes of the blocks in the certificate ${\cal C}$ computed by the union algorithm, respectively; $\delta$ is the length of ${\cal C}$; and $m_1, \dots, m_\delta$ is a sequence where $m_i$ is the number of maxima sequences whose blocks form the $i$-th argument of ${\cal C}$. This number of comparisons is optimal in the worst case over instances $\mathcal{S}$ formed by $\sigma$ smooth sequences which \textsc{Maxima Set} have certificates ${\cal C}$ of length $\delta$ formed by $\beta$ blocks of sizes $s_1, \dots, s_\beta$, such that $m_1, \dots, m_\delta$ is a sequence where $m_i$ is the number of maxima sequences whose blocks form the $i$-th argument of ${\cal C}$.
\end{theorem}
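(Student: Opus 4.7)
The plan is to break the proof into four parts: the preprocessing cost, the merge cost via \texttt{Quick Union Maxima}, the containment in $O(n\log\min(\sigma,h))$, and the matching lower bound.

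First I would establish the $2n$ additive term. By Section~\ref{sec:inputOrderAdaptivePlanarMaxima}, the decomposition of $\mathcal{S}$ into the minimal number $\sigma$ of smooth subsequences can be computed by a single greedy left-to-right scan, at a cost of one comparison per input point to detect exceptions and restart the greedy extractor. The same pass also extracts the \textsc{Maxima Set} of each smooth subsequence in linear time, so summed over all subsequences the preprocessing performs at most $2n$ comparisons and produces $\sigma$ maxima sequences sorted by $x$-coordinate, ready to be fed into \texttt{Quick Union Maxima}.

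Next I would invoke Theorem~\ref{theo:qum} on these $\sigma$ sorted maxima sequences, which bounds the merge cost by $O(\sum_{j=1}^{\beta}\log s_j + \sum_{i=1}^{\delta}\log\binom{\sigma}{m_i})$. To prove containment in $O(n\log\sigma)$, I would argue that the $\beta$ blocks of the certificate partition at most $n$ positions across the maxima sequences, so concavity of the logarithm gives $\sum_j \log s_j \le \beta\log(n/\beta) \in O(n)$, and the inequality $\log\binom{\sigma}{m_i} \le m_i\log\sigma$ combined with $\sum_i m_i \le \beta \le n$ (each block belongs to at most one argument) yields $\sum_i\log\binom{\sigma}{m_i}\in O(n\log\sigma)$. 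For the alternate $O(n\log h)$ bound, I would use Kirkpatrick and Seidel's~\cite{1985-SOCG-OutputSizeSensitiveAlgorithmsForFindingMaximalVectors-KirkpatrickSeidel} output-sensitive algorithm as a witness: run on the same $n$ input points, it implicitly produces a valid certificate within $O(n\log h)$ comparisons, and the optimality clause of Theorem~\ref{theo:qum} then forces \texttt{Quick Union Maxima}'s certificate cost to be within a constant factor.

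The lower bound would follow from an adversarial counting argument mirroring the one implicit in Theorem~\ref{theo:qum}. Given fixed parameters $\sigma, \beta, \delta$, block sizes $(s_j)$, and shape $(m_i)$, I would exhibit a family of instances consisting of $\sigma$ smooth input subsequences whose union admits a certificate of the specified structure, but where the choice of which $m_i$ out of $\sigma$ sequences support the $i$-th argument and the internal alignment of the $\beta$ blocks within their sequences can vary independently. Counting yields at least $\prod_j \Theta(s_j) \cdot \prod_i \binom{\sigma}{m_i}$ distinct instances inducing distinct outputs, so any correct comparison-based algorithm has decision-tree depth $\Omega(\sum_j\log s_j + \sum_i\log\binom{\sigma}{m_i})$ on the worst instance of this family.

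The main obstacle I foresee is the $O(n\log h)$ side of the $\min$: Theorem~\ref{theo:qum} guarantees optimality only relative to a fixed certificate structure, so the argument must also ensure that \texttt{Quick Union Maxima} automatically computes an asymptotically cheap certificate whenever one exists. Lemma~\ref{lem:opt-max} controls only the certificate length $\delta$ within a constant factor of the minimum; matching the full cost parameters $\beta$, $s_j$ and $m_i$ of a cheapest certificate will likely require either a direct competitive argument against Kirkpatrick and Seidel on each recursive subproblem, or an amortised reformulation, and pinning down the cleanest approach is where the real technical work lies.
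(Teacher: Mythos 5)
Your overall architecture matches the paper's: a single greedy linear-time pass that splits $\mathcal{S}$ into the $\sigma$ smooth subsequences and extracts their maxima sequences (accounting for the $2n$ additive term), followed by an invocation of Theorem~\ref{theo:qum} for the merge, and a lower bound built from a family of ``hard'' instances parameterized by $\sigma,\beta,s_1,\dots,s_\beta,\delta,m_1,\dots,m_\delta$. The paper phrases the lower bound as a uniform distribution over such instances plus Yao's minimax principle (which also covers randomized algorithms), whereas you count distinct outputs to bound the depth of a deterministic comparison tree; both are legitimate, and the paper's own treatment of this part is no more detailed than yours.

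The genuine gap is your argument for the $O(n\log h)$ side of the containment $\subseteq O(n\log(\min(\sigma,h)))$. Using Kirkpatrick and Seidel as a ``witness'' does not go through: the optimality clause of Theorem~\ref{theo:qum} is a worst-case statement over the class of instances sharing the certificate parameters $(\beta, s_j, \delta, m_i)$, not an instance-by-instance competitiveness guarantee against an arbitrary algorithm, and Lemma~\ref{lem:opt-max} controls only the certificate \emph{length} $\delta$, not the cost $\sum_j\log s_j+\sum_i\log\binom{\sigma}{m_i}$; moreover Kirkpatrick--Seidel solves a different problem (maxima from raw points) and does not obviously emit a merging certificate of the required form. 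The direct route is structural: every recursive call of \texttt{Quick Union Maxima} identifies at least one new output point $p$ and contributes $O(1)$ arguments to $\mathcal{C}$, so $\delta\in O(h)$; combined with $\sum_i m_i\le\beta\le n$, $\sigma\le n$, and concavity, $\sum_i\log\binom{\sigma}{m_i}\le\sum_i m_i\log\frac{e\sigma}{m_i}\le \bigl(\sum_i m_i\bigr)\log\frac{e\sigma\delta}{\sum_i m_i}\in O(n(1+\log h))$, while $\sum_i m_i\log\sigma\le n\log\sigma$ and $\sum_j\log s_j\in O(n)$ give the other branch of the $\min$. (To be fair, the paper itself states the containment without spelling this out, treating the upper bound as an immediate consequence of Theorem~\ref{theo:qum} and the linear-time partition; but since you flagged this as the open obstacle, this is the missing piece.)
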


\begin{VLONG}
  We prove that the number of comparisons performed by the algorithm \texttt{(Smooth, Structure) Synergistic Maxima} is asymptotically optimal in the worst case over instances formed by $n$ points grouped in $\sigma$ smooth sequences, with a final \textsc{Maxima Set} of size $h$.
  The upper bound is a consequence of the Theorem~\ref{theo:qum} and the linear time partitioning algorithm described in Section~\ref{sec:inputOrderAdaptivePlanarMaxima}.  We describe the intuition for the lower bound below: it is a simple adversary argument, based on the definition of a family of ``hard'' instances for each possible value of the parameters of the analysis, building over each other.

  First, we verify the lower bound for ``easy'' instances, of finite difficulty:
  general instances formed by a single ($\sigma=1$) smooth sequence obviously require $\Omega(n)$ comparisons (no correct algorithm can afford to ignore a single point of the input, which could dominate all others), while
  general instances dominated by a single point ($h=1$) also require $\Omega(n)$ comparisons (similarly to the computation of the maximum of an unsorted sequence).
  Each of this lower bound yields a distribution of instances, either of smoothness $\sigma=1$ or of output size $h=1$, such that any deterministic algorithm performs $\Omega(n)$ comparisons on average on a uniform distribution of those instances.

  Such distributions of ``elementary'' instances can be duplicated so that to produce various distributions of elementary instances; and combined so that to define a distribution of harder instances:
  \begin{lemma}
    Given the positive integers $n,\sigma, \beta, s_1, \ldots, s_\beta, \delta, m_1, \ldots, m_\delta$,
    there is a family of instances which can each be partitioned into $\sigma$ smooth sequences such that,
    \begin{itemize}
    \item $\beta$ and $s_1, \ldots, s_\beta$ are the number and sizes of the blocks in the certificate ${\cal C}$ computed by the union algorithm, respectively;
    \item $\delta$ is the length of ${\cal C}$;
    \item $m_1, \ldots, m_\delta$ is a sequence where $m_i$ is the number of maxima sequences of the smooth sequences whose blocks form the $i$-th argument of ${\cal C}$; and
    \item on average on a uniform distribution of these instances, any algorithm computing the \textsc{Maxima Set} of $S$ in the comparison model performs within $\Omega(n + \sum^{\delta}_{i=1}\log{\binom{\sigma}{m_i}})$ comparisons.
    \end{itemize}
  \end{lemma}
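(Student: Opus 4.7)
The plan is to construct, for each valid parameter vector, a family of instances and then lower-bound the average cost of any comparison-based algorithm on a uniform distribution over the family by the entropy of that distribution. The $\Omega(n)$ term is the easy part: within the family one can always include instances where a single point dominates all the others (a single-output sub-configuration), for which a standard adversary argument forces any correct algorithm to perform $\Omega(n)$ comparisons, since every input point must be certified as non-maximal against the output.

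The heart of the argument is the term $\sum_{i=1}^{\delta}\log\binom{\sigma}{m_i}$. For each argument index $i\in[1..\delta]$, I would design a gadget $G_i$ that realises a single certificate argument of the prescribed shape: a block of $s_{j(i)}$ consecutive points in one ``principal'' slot, certified as maximal (or dominated) against points contributed by exactly $m_i$ of the $\sigma$ smooth-sequence slots, with the remaining $\sigma-m_i$ slots contributing nothing to $G_i$. The identity of the $m_i$-subset of $[\sigma]$ is drawn uniformly and independently across the $\delta$ gadgets, producing a family of $\prod_{i=1}^{\delta}\binom{\sigma}{m_i}$ distinct instances. Distinct choices force distinct certificates (the $i$-th argument lists a different set of sequence indices), and hence distinct outputs, so by the classical entropy lower bound on decision trees the average number of comparisons is at least $\log\prod_i\binom{\sigma}{m_i}=\sum_i\log\binom{\sigma}{m_i}$.

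To assemble the gadgets into a single instance realising all the parameters simultaneously, I would place $G_1,\ldots,G_\delta$ along a global monotone staircase of disjoint axis-aligned cells in the plane, so that no point of $G_i$ is comparable in the dominance order with any point of $G_j$ for $j\neq i$. This guarantees that the certificate of the union is the concatenation of the per-gadget certificates, yielding exactly $\beta$ blocks whose sizes $s_1,\ldots,s_\beta$ are prescribed by the construction, and that the $i$-th argument indeed involves exactly $m_i$ of the $\sigma$ sequences. Within each slot I would add dominated padding points, inserted strictly inside the dominance cone of an argument point of their own gadget, so that the total input size becomes $n$ and so that the greedy input-order partition of Section~\ref{sec:inputOrderAdaptivePlanarMaxima} still produces exactly $\sigma$ smooth pieces.

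The main obstacle I expect is precisely this bookkeeping: verifying that padding and concatenation preserve the parameters, in particular that the smoothness stays exactly $\sigma$ (the padding must not allow two slots to be merged into a single smooth subsequence) and that no spurious extra maximality arguments appear (which would inflate $\delta$ or $\beta$). I would handle this by assigning pairwise disjoint $x$-coordinate intervals to the $\sigma$ slots inside each gadget's cell, by ordering points within a cell so that the greedy smooth-decomposition is forced to open a new subsequence at each gadget boundary, and by choosing padding $y$-coordinates low enough that every padding point is dominated by the principal argument point of its own gadget. Once the construction is shown to realise the parameters, the entropy and adversary bounds combine additively into the claimed $\Omega\bigl(n+\sum_{i=1}^{\delta}\log\binom{\sigma}{m_i}\bigr)$ average-case lower bound.
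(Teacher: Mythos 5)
Your proposal follows essentially the same route as the paper, which itself only sketches this lemma: build elementary ``hard'' distributions (single dominating point for the $\Omega(n)$ term, subset choices among the $\sigma$ sequences for each argument), combine/duplicate them into one instance family realising the prescribed parameters, and apply an entropy--adversary counting bound to get $\Omega\bigl(n+\sum_{i=1}^{\delta}\log\binom{\sigma}{m_i}\bigr)$ on average. Your gadget-and-staircase construction simply makes explicit the paper's one-line ``duplicated and combined'' step, so the two arguments coincide in substance; just be careful that ``distinct certificates hence distinct outputs'' is justified via the adversary (an unverified point can be perturbed to become maximal) rather than by output counting alone, since the maximal positions may coincide across instances.
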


  Finally, any such distribution with a computational lower bound on average yields a computational lower bound for the worst case instance complexity of any randomized algorithm, on average on its randomness; and as a particular case a lower bound on the worst case complexity of any deterministic algorithm:

\begin{corollary}
  Given the positive integers $\sigma, \beta, s_1, \ldots, s_\beta, \delta, m_1, \ldots, m_\delta$, and an algorithm $A$ computing the \textsc{Maxima Set} of a sequence of $n$ planar points in the comparison model (whether deterministic or randomized), there is an instance $I$ such that $A$ performs a number of comparisons within $\Omega(n + \sum^{\delta}_{i=1}\log{\binom{\sigma}{m_i}})$ when it computes the \textsc{Maxima Set} of $I$.
\end{corollary}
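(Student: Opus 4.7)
\begin{PROOF}
\begin{proof}[Proof plan]
The plan is to derive the corollary from the preceding lemma by a standard application of Yao's minimax principle, so that the distributional lower bound over a family of hard instances transfers into a worst case lower bound that applies uniformly to deterministic and randomized algorithms in the comparison model.

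First I would fix the parameters $\sigma, \beta, s_1, \ldots, s_\beta, \delta, m_1, \ldots, m_\delta$ and invoke the previous lemma to obtain a finite family $\mathcal{F}$ of instances, each admitting a decomposition into $\sigma$ smooth sequences whose \textsc{Maxima Set} has a certificate of the prescribed shape, together with the guarantee that on the uniform distribution $\mathcal{D}$ over $\mathcal{F}$, every deterministic comparison algorithm performs an expected number of comparisons within $\Omega(n+\sum_{i=1}^{\delta}\log\binom{\sigma}{m_i})$. I would then view a randomized algorithm $A$ as a distribution over deterministic algorithms, indexed by its internal random string $r$. By Yao's minimax principle, the worst case expected cost of $A$ over $r$ is lower bounded by the best expected cost, over $\mathcal{D}$, of any deterministic algorithm, which is in turn within $\Omega(n+\sum_{i=1}^{\delta}\log\binom{\sigma}{m_i})$ by the lemma.

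Next I would conclude by averaging: since the expected number of comparisons of $A$ on some fixed instance $I\in\mathcal{F}$, averaged over the random choice of $I\sim\mathcal{D}$ and then over $r$, is at least $\Omega(n+\sum_{i=1}^{\delta}\log\binom{\sigma}{m_i})$, an averaging argument yields a specific instance $I\in\mathcal{F}$ on which $A$ performs this many comparisons in expectation (and hence with positive probability over $r$, and certainly in the worst case for deterministic $A$). Taking this $I$ proves the corollary, since by construction $I$ lies in the class of instances parameterized by the given $\sigma, \beta, s_1,\ldots,s_\beta, \delta, m_1,\ldots,m_\delta$.

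The only genuinely delicate point, which I expect to be the main obstacle, is not the Yao step itself but verifying that the family $\mathcal{F}$ produced by the previous lemma truly realizes the prescribed certificate shape for \emph{every} instance in the family, so that the lower bound statement matches the parameters that indexed the instances. This is handled in the proof of the lemma (via an adversary argument that duplicates the elementary hard distributions of smoothness $\sigma=1$ and output size $h=1$, and combines them into a composite instance), and all that remains here is to check that the composition preserves the parameters $\beta, s_1,\ldots,s_\beta,\delta,m_1,\ldots,m_\delta$ of the certificate; once this is in place, the minimax step is mechanical.
\end{proof}
\end{PROOF}
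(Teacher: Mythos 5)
Your proposal is correct and matches the paper's approach exactly: the paper proves this corollary in one line as ``a direct application of Yao's minimax principle'' applied to the distributional lower bound of the preceding lemma, which is precisely the argument you spell out (deriving the randomized/deterministic worst-case bound from the average-case bound over the hard family, then picking a witness instance by averaging). Your added caveat about the family realizing the prescribed certificate parameters is indeed the lemma's responsibility, not the corollary's, so nothing is missing here.
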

\begin{proof}
  \begin{LONG}
    A direct application of Yao's minimax principle \cite{1958-PJM-OnGeneralMinimaxTheorems-Sion,1977-FOCS-ProbabilisticComputationsTowardAUnifiedMeasureOfComplexity-Yao,1944-BOOK-TheoryOfGamesAndEconomicBehavior-VonNeumannMorgenstern}.  \qed
  \end{LONG}
  \begin{SHORT}
    A direct application of Yao's minimax principle \cite{1958-PJM-OnGeneralMinimaxTheorems-Sion}. \qed
  \end{SHORT}
\end{proof}
\end{VLONG}

The histories of the computation of the \textsc{Maxima Set} and of the computation of the \textsc{Convex Hull} are strongly correlated: most of the results on one problem also generalize on the other one. Our results on the computation of the \textsc{Maxima Set} similarly generalize to the computation of the \textsc{Convex Hull}, albeit it requires additional work and concepts, which we describe in the next section.

\section{Convex Hull}
\label{sec:convex}

Given a set ${\cal S}$ of planar points, the \textsc{Convex Hull} of ${\cal S}$ is the smallest convex set containing ${\cal S}$~\cite{1985-BOOK-ComputationalGeometryAnIntroduction-PreparataShamos}.
Given $n$ points in the plane, the problem of computing their \textsc{Convex Hull} is well studied: the worst case complexity over instances of size $n$ is within $\Theta(n\lg n)$ in the algebraic decision tree computational model~\cite{1977-CACM-ConvexHullsOfFiniteSetsOfPointsInTwoAndThreeDimensions-PreparataHong}. Several refinements of this analysis are known: some taking advantage of the input structure~\cite{1985-SOCG-OutputSizeSensitiveAlgorithmsForFindingMaximalVectors-KirkpatrickSeidel, 2009-FOCS-InstanceOptimalGeometricAlgorithms-AfshaniBarbayChan} and some taking advantage of the input order~\cite{2011-IEICE-AdaptiveAlgorithmsForPlanarConvexHullProblems-AhnOkamoto,2002-SWAT-AdaptiveAlgorithmsForConstructingConvexHullsAndTriangulationsOfPolygonalChains-LevcopoulosLingasMitchell}.

\begin{INUTILE}
  Inspired by the results on the computation of \textsc{Maxima Sets} taking advantage of both the input order and the input structure presented in Section~\ref{sec:maxima}, we present similar results on the computation of the \textsc{Convex Hull} which take advantage both of the input order (as defined by Levcopoulos \emph{et al.}~\cite{2002-SWAT-AdaptiveAlgorithmsForConstructingConvexHullsAndTriangulationsOfPolygonalChains-LevcopoulosLingasMitchell}) and of the input structure (as defined by Afshani \emph{et al.}~\cite{2009-FOCS-InstanceOptimalGeometricAlgorithms-AfshaniBarbayChan}) at the same time in a synergistic way.
\end{INUTILE}

Levcopoulos et al.~\cite{2002-SWAT-AdaptiveAlgorithmsForConstructingConvexHullsAndTriangulationsOfPolygonalChains-LevcopoulosLingasMitchell} described how to \emph{partition} a sequence of points into subsequences of consecutive positions for which the \textsc{Convex Hull} can be computed in linear time. We refine their analysis to take into account the distribution of the sizes of the subsequences (Section~\ref{sec:inputOrderAdaptiveConvexHull}). This notion of input order for the computation of the \textsc{Convex Hull} is less restrictive than the one seen for the computation of the \textsc{Maxima Set}, in the sense that it allows to consider more sophisticated sequences as ``easy'' sequences.

As the computation of \textsc{Convex Hulls} reduces to the computation of \textsc{Upper Hulls}\begin{LONG} (the computation of the \textsc{Lower Hull} is symmetric and completes it into the computation of the \textsc{Convex Hull})\end{LONG}, we focus on the latter.
We describe an algorithm \textsc{Merging Upper Hulls} in Section~\ref{sec:UpperHullUnion}, which yields a synergistic algorithm taking advantage of both the input order and the input structure in Section~\ref{sec:synergisticUpperHull}. This synergistic algorithm outperforms both the algorithms described by Levcopoulos et al.~\cite{2002-SWAT-AdaptiveAlgorithmsForConstructingConvexHullsAndTriangulationsOfPolygonalChains-LevcopoulosLingasMitchell} and Afshani et al.~\cite{2009-FOCS-InstanceOptimalGeometricAlgorithms-AfshaniBarbayChan}, as well as any dovetailing combination of them.

\begin{TODO}
CHECKOUT \cite{1990-BIT-ASublogarithmicConvexHullAlgorithm-FjallstromKatajainenLevcopoulosPetersson}, I don't remember what results it has?
\end{TODO}

\subsection{Input Order Adaptive Convex Hull}
\label{sec:inputOrderAdaptiveConvexHull}

A \emph{polygonal chain} is a curve specified by a sequence of points $p_1, \dots, p_n$. The curve itself consists of the line segments connecting the pairs of consecutive points. A polygonal chain is \emph{simple} if
\begin{SHORT}
  it does not have a self-intersection.
\end{SHORT}
\begin{LONG}
  any two edges of $P$ that are not adjacent are disjoint, or if the intersection point is a vertex of $P$; and any two adjacent edges share only their common vertex.
\end{LONG}
Levcopoulos et al.~\cite{2002-SWAT-AdaptiveAlgorithmsForConstructingConvexHullsAndTriangulationsOfPolygonalChains-LevcopoulosLingasMitchell} described an algorithm that computes the \textsc{Convex Hull} of $n$ planar points in time within $O(n\log\kappa)$, where $\kappa$ is the minimal number of simple chains into which the input sequence of points can be partitioned. The algorithm partitions the points into simple subchains, computes their \textsc{Convex Hulls}, and merges them. In their analysis the complexity of both the partitioning and merging steps are within $O(n\log\kappa)$.
In Section~\ref{sec:synergisticUpperHull}, we describe a partitioning algorithm running in linear time, which is key to the synergistic result.
\begin{LONG}
  
  For a given polygonal chain, there can be several partitions into simple subchains of minimum size for it.
\end{LONG}
We describe a refined analysis which takes into account the relative imbalance between the sizes of the subchains.
\begin{LONG}
  The idea behind the refinement is to bound the number of operations that the algorithm executes for every simple subchain.
  This analysis makes it possible to identify families of instances where the complexity of the algorithm is linear even though the number of simple subchains into which the chain is split is logarithmic.
\end{LONG}
\begin{LONG}
  In the recursion tree of the execution of the algorithm described by Levcopoulos \emph{et al.}~\cite{2002-SWAT-AdaptiveAlgorithmsForConstructingConvexHullsAndTriangulationsOfPolygonalChains-LevcopoulosLingasMitchell} on an input $C$ formed by $n$ planar points, every node represents a subchain of $C$. The cost of every node is linear in the size of the subchain that it represents. The simplicity test and the merge process are both linear in the number of points in the subchain. Every time this algorithm discovers that the polygonal chain is simple, the corresponding node in the recursion tree becomes a leaf.
\end{LONG}

\begin{theorem}
\label{theo:simple}
Given a sequence $S$ of $n$ planar points which can be partitioned into $\kappa$ simple subchains of respective sizes $n_1,\ldots,n_\kappa$, Levcopoulos et al.'s algorithm~\cite{2002-SWAT-AdaptiveAlgorithmsForConstructingConvexHullsAndTriangulationsOfPolygonalChains-LevcopoulosLingasMitchell}  computes the convex hull of $S$ in time  within $O(n(1+\mathcal{H}(n_1, \dots, n_{\kappa}))) \subseteq O(n(1{+}\log{\kappa})) \subseteq O(n\log{n})$, where $\mathcal{H}(n_1, \dots, n_\kappa) = \sum_{i=1}^\kappa{\frac{n_i}{n}}\log{\frac{n}{n_i}}$\begin{SHORT}.\end{SHORT}
\begin{LONG}
, which is worst-case optimal over instances of $n$ points that can be partitioned into $\kappa$ simple subchains of sizes $n_1, \dots, n_{\kappa}$.
\end{LONG}
\end{theorem}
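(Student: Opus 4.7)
\medskip

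\noindent\textbf{Proof Proposal.}
The plan is to separate the theorem into an upper bound (the refined analysis of the recursion) and a worst-case lower bound (a reduction from multiway merging), the former being the substantial part.

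For the \emph{upper bound}, I would proceed by strong induction on $n$, strengthening the inductive hypothesis to allow any simple partition: for constants $c_{1}, c_{2}$ to be fixed at the end, I claim that for every chain $S$ of size $n$ and every partition $P$ of $S$ into simple subchains of sizes $m_{1}, \dots, m_{k}$, the running time satisfies
\begin{equation*}
T(S)\;\le\;c_{1}\,n \;+\; c_{2}\sum_{i=1}^{k} m_{i}\log\!\Bigl(\tfrac{n}{m_{i}}\Bigr).
\end{equation*}
The base case is immediate: if $S$ is simple, Chazelle's test plus Melkman's algorithm runs in linear time, and the entropy term is nonnegative. In the inductive step the algorithm splits $S$ at its midpoint into $L, R$ of sizes $n_{L}, n_{R}$ with $n_{L}+n_{R}=n$. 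From $P$ I derive an \emph{induced} partition of $L$ (resp.\ $R$) by cutting the at most one subchain of $P$ that straddles the midpoint into pieces of sizes $s_{L}$ and $s_{R}$ with $s=s_{L}+s_{R}$; this partition is a valid simple partition because any subchain of a simple chain is simple, and it may be used with the inductive hypothesis (which holds for every simple partition).

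The heart of the proof is the following entropy-combining inequality, which I would prove as a separate lemma. Writing $\mathcal{E}(n;P)=\sum_{i} m_{i}\log(n/m_{i})$, and using $\log(n_{L}/x)=\log(n/x)-\log(n/n_{L})$ together with $\sum_j\ell_j+s_L=n_L$ on the left (and symmetrically on the right), one obtains
\begin{equation*}
\mathcal{E}(n_{L};P_{L})+\mathcal{E}(n_{R};P_{R})
=\mathcal{E}(n;P)\;+\;\bigl[f(s_{L})+f(s_{R})-f(s)\bigr]\;-\;\bigl[n_{L}\log(n/n_{L})+n_{R}\log(n/n_{R})\bigr],
\end{equation*}
where $f(x)=x\log(n/x)$. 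The last bracket equals $n\pm O(1)$ since $n_L,n_R=\lceil n/2\rceil,\lfloor n/2\rfloor$. For the middle bracket, the concavity of $f$ and Jensen's inequality yield $f(s_{L})+f(s_{R})\le 2f(s/2)=f(s)+s$, so the combined entropy drops by at least $n-s-O(1)$ from the parent. Plugging this into $T(S)\le c\,n + T(L)+T(R)$ and choosing $c_{2}\ge 2c$ suffices to absorb the linear cost of one level of recursion in the ``savings'' $c_{2}(n-s)$, closing the induction whenever $s\le n/2$.

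The main obstacle is precisely the corner case $s>n/2$, where the above savings are not enough; this is where I would work hardest. Here the observation is that a straddler occupying more than half the positions forces one side to be entirely contained inside it, so that side is simple, is detected as such at the next recursive call, and is processed in linear time by Melkman without producing any further recursion. Consequently the recursion only continues on the other side, whose size drops by a factor of two, and only $O(\log n)$ such extreme levels can occur consecutively before the straddler is exhausted; folding this extra linear cost into the $c_{1}n$ term (by taking $c_{1}$ slightly larger than $c$) closes the induction in all cases. Combined with the linear-time partitioning algorithm promised in Section~\ref{sec:synergisticUpperHull}, this yields the claimed $O(n(1+\mathcal{H}(n_{1},\dots,n_{\kappa})))$ bound, and the containments $\mathcal{H}\le\log\kappa\le\log n$ give the stated chain of inclusions.

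For the \emph{worst-case optimality} over instances with the same profile $(n;n_{1},\dots,n_{\kappa})$, I would use a standard adversary/information-theoretic argument: place each prescribed simple subchain on a distinct monotone arc of a convex curve so that the subchains are the sorted runs of a sequence of $n$ real numbers; any correct convex hull algorithm can be used to recover the sorted order, and the lower bound for multiway merging of $\kappa$ sorted runs of sizes $n_{1},\dots,n_{\kappa}$ gives $\Omega(n\,\mathcal{H}(n_{1},\dots,n_{\kappa}))$ comparisons, with the additive $\Omega(n)$ coming from the necessity to read every input point. Yao's minimax principle then transfers the bound to randomized comparison algorithms.
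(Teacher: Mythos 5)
Your lower-bound sketch (placing the prescribed subchains as monotone arcs of a convex curve and reducing from sorting a sequence covered by runs) is essentially the paper's own reduction. Your upper bound, however, takes a different route — induction on $n$ with an entropy-combining potential — whereas the paper does a global charging argument over the recursion tree: each node's linear cost is charged to the subchains meeting it, a subchain of size $n_i$ is only met by non-leaf nodes at the roughly $\log(n/n_i)$ levels whose nodes are larger than $n_i$, and below that level only its two boundary nodes survive, with geometrically decreasing sizes, giving $O(n_i\log(n/n_i)+n_i)$ per subchain. The problem is that the case you yourself identify as the crux, $s>n/2$, rests on a false claim: a straddling subchain with more than $n/2$ points does \emph{not} force one of the two halves to be contained in it. Take $n=100$ and a straddler occupying positions $30$ through $95$ (so $s=66$): both halves contain points outside the straddler, neither half need be simple (the edges joining the straddler to its flanking subchains can create crossings on both sides), the algorithm recurses on both halves, and your ``one side becomes a Melkman leaf'' escape does not apply.

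In exactly this regime the induction as you set it up cannot close. Consider a straddler of size $n-2$ flanked by two singleton subchains, split near its middle: the entropy decrease from parent to children, which by your own identity is $n_L\log(n/n_L)+n_R\log(n/n_R)-\bigl[s_L\log(s/s_L)+s_R\log(s/s_R)\bigr]\approx n-s$, is then only $O(1)$, while the root still pays $\Theta(n)$ for the simplicity test and merge and the inductive hypothesis applied to the two halves already consumes $c_1(n_L+n_R)=c_1n$; no choice of fixed constants $c_1,c_2$ absorbs the root's cost. (The theorem's bound is still $O(n)$ on such instances — the algorithm keeps splitting off simple interior pieces of the straddler as leaves — so it is the potential-function bookkeeping, not the statement, that breaks.) To repair this you would need either a global amortization in the spirit of the paper's proof, or a strengthened invariant that does not pay $c_1$ per point at every level (for instance, one that also tracks how many distinct subchains of the partition meet the current node). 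As written, the key case of your induction depends on an incorrect geometric assertion, so the upper-bound argument does not go through; the entropy-combining lemma itself and the $s\le n/2$ case are fine, and the optimality part matches the paper.
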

\begin{INUTILE}
  \begin{theorem}
\label{theo:simple}
Given a sequence $S$ of $n$ planar points, Levcopoulos et al.'s algorithm~\cite{2002-SWAT-AdaptiveAlgorithmsForConstructingConvexHullsAndTriangulationsOfPolygonalChains-LevcopoulosLingasMitchell} computes the convex hull of $S$ in time within $O(n(1+\alpha)) \subseteq O(n(1{+}\log{\kappa})) \subseteq O(n\log{n})$, where $\alpha$ is the minimal entropy $\min\{\mathcal{H}(n_1, \dots, n_{\kappa})$ any partition of $S$ into
 $\kappa$ simple subchains of consecutive positions, of respective sizes $n_1,\ldots,n_\kappa}\}$  and $\mathcal{H}(n_1, \dots, n_\kappa) = \sum_{i=1}^\kappa{\frac{n_i}{n}}\log{\frac{n}{n_i}}$.
\begin{LONG}
  , which is worst-case optimal over instances of $n$ points that can be partitioned into $\kappa$ simple subchains of sizes $n_1, \dots, n_{\kappa}$.
\end{LONG}
\end{theorem}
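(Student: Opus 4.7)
The plan is to bound the total cost by analyzing each position's contribution across the recursion tree $T$ of Levcopoulos et al.'s algorithm. Every node $v\in T$ corresponds to a contiguous subchain $S_v\subseteq S$, and the work done at $v$ (simplicity test by Chazelle, plus either Melkman's hull or a Preparata--Shamos merge) is linear in $|S_v|$. Thus the total running time is within a constant factor of $\sum_{v\in T}|S_v|=\sum_{p\in S}d(p)$, where $d(p)$ is the depth of the leaf containing $p$ (equivalently, the number of nodes of $T$ whose subchain contains $p$). Reducing the argument to bounding $\sum_{p\in S}d(p)$ is the pivot of the proof.

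The next step is a geometric depth bound per position. For each position $p$, let $i(p)$ denote the index of the simple subchain it belongs to, and let $d_i(p)$ denote the distance from $p$ (measured along $S$) to the nearer endpoint of $S_{i(p)}$. Since the algorithm halves the active subchain at every recursion step, the subchain at level $\ell$ containing $p$ has length at most $\lceil n/2^\ell\rceil$. As soon as this length drops below $d_i(p)$, the subchain lies entirely inside $S_{i(p)}$; being a subchain of a simple chain it is itself simple, so Chazelle's test succeeds and the recursion terminates. Consequently $d(p)\le \log(n/d_i(p))+O(1)$. (The algorithm may of course stop earlier when two or more consecutive simple subchains concatenate into a simple chain, so this is a genuine upper bound.)

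The final step is an entropy-style summation. For each $S_i$, at most two positions lie at any given distance $d$ from the closer endpoint of $S_i$, so
\begin{equation*}
  \sum_{p\in S_i}\log\!\frac{n}{d_i(p)}\le 2\sum_{d=1}^{\lceil n_i/2\rceil}\log\!\frac{n}{d}=n_i\log n-2\sum_{d=1}^{\lceil n_i/2\rceil}\log d+O(n_i)=O\!\left(n_i\log\!\frac{n}{n_i}+n_i\right),
\end{equation*}
using the standard estimate $\sum_{d=1}^{m}\log d=m\log m-O(m)$. Summing over $i=1,\dots,\kappa$ yields
\begin{equation*}
  \sum_{p\in S}d(p)\le O\!\left(n+\sum_{i=1}^{\kappa}n_i\log\frac{n}{n_i}\right)=O\!\left(n\bigl(1+\mathcal{H}(n_1,\dots,n_\kappa)\bigr)\right),
\end{equation*}
which is the desired bound. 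The two weaker inclusions $O(n(1+\mathcal{H}))\subseteq O(n(1+\log\kappa))\subseteq O(n\log n)$ follow immediately from the concavity of the logarithm (Jensen's inequality gives $\mathcal{H}\le\log\kappa$) and $\kappa\le n$.

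The main obstacle is the depth bound $d(p)\le\log(n/d_i(p))+O(1)$: all the analytic work is concentrated there. Two subtleties must be handled carefully. First, Levcopoulos et al.'s partition is a halving of positions, not of subchains, so one must argue cleanly that the subchain at level $\ell$ containing $p$ has length $\lceil n/2^\ell\rceil$ rather than $n/2^\ell$ (floor/ceiling discrepancies only affect the hidden constant). Second, one must confirm that the stopping criterion really is ``subchain is simple'' and not some stronger condition that could prevent termination once the subchain is inside a single $S_i$; this follows because simplicity is preserved under taking contiguous subchains, and Chazelle's linear-time simplicity test is exact. Once this lemma is in place, the entropy calculation is routine.
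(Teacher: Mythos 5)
Your upper-bound argument is correct and is essentially the paper's proof in different bookkeeping: both hinge on the observation that any contiguous node of the recursion tree that falls entirely inside one simple subchain $S_i$ of the fixed partition is itself simple and hence becomes a leaf, so the work involving the points of $S_i$ is confined to the top $O(\log\frac{n}{n_i})$ levels plus a geometrically decreasing tail of $O(n_i)$. The paper charges this per subchain and per level (at most $n_i$ work per level down to level $\lceil\log\frac{n}{n_i}\rceil+1$, then $O(n_i)$ for the boundary-straddling nodes below), whereas you charge per point via the depth bound $d(p)\le\log\frac{n}{d_i(p)}+O(1)$ and recover $n_i\log\frac{n}{n_i}+O(n_i)$ by summing over distances to the boundary of $S_i$; the two computations are interchangeable, and your handling of the subtleties (ceilings in the halving, simplicity being hereditary for contiguous subchains, exactness of the simplicity test) is sound. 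Note also that since the bound holds for an arbitrary partition into simple subchains, it holds in particular for the entropy-minimizing one, which is what the statement with $\alpha$ requires.

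There is, however, a genuine omission: the theorem also asserts that this running time is worst-case optimal over instances of $n$ points partitionable into $\kappa$ simple subchains of sizes $n_1,\dots,n_\kappa$, and your proposal never addresses this lower bound. The paper proves it by reduction from \textsc{Sorting}: Barbay and Navarro's $\Omega(n(1+\mathcal{H}(r_1,\dots,r_\rho)))$ comparison lower bound for sorting a sequence decomposable into $\rho$ runs of sizes $r_1,\dots,r_\rho$ transfers to convex-hull computation by lifting each number $r$ to the point $(r,r^2)$ on the parabola; the $\rho$ runs become $\rho$ simple subchains of the same sizes, and the sorted order is recoverable from the convex hull in linear time. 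Without this (or an equivalent adversary argument matching the entropy bound), the optimality clause of the statement remains unproven.
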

\end{INUTILE}
\begin{PROOF}
  \begin{proof}
    Fix the subchain $c_i$ of size $n_i$. In the worst case, the algorithm considers the $n_i$ points of $c_i$ for the simplicity test and the merging process, in all the levels of the recursion tree from the first level to the level $\lceil \log{\frac{n}{n_i}} \rceil + 1$, because the sizes of the subchains in these levels are greater than $n_i$. In the next level, one of the nodes $\ell$ of the recursion tree fits completely inside $c_i$ and therefore it becomes a leaf. Hence, at least $\frac{n_i}{4}$ points from $c_i$ are discarded for the following iterations. The remaining points of $c_i$ are in the left or the right ends of subchains represented by nodes in the same level of $\ell$ in the recursion tree. In all of the following levels, the number of operations of the algorithm involving points from $c_i$ can be bounded by the size of the subchains in those levels. So, the sum of the number of the operations in these levels is within $O(n_i)$.  As a result, the number of operations of the algorithm involving points from $c_i$ is within $O(n_i\log{\frac{n}{n_i}} + n_i)$. In total, the time complexity of the algorithm is within $O(n + \sum_{i=1}^\kappa n_i\log{\frac{n}{n_i}}) = O(n(1+\mathcal{H}(n_1, \dots, n_\kappa))) \subseteq O(n(1{+}\log{\kappa})) \subseteq O(n\log{n})$.

We prove the optimality of this complexity in the worst-case over instances of $n$ points that can be partitioned into $\kappa$ simple subchains of sizes $n_1, \dots, n_{\kappa}$ by giving a tight lower bound. Barbay and Navarro~\cite{2013-TCS-OnCompressingPermutationsAndAdaptiveSorting-BarbayNavarro} showed a lower bound of $\Omega(n(1+{\cal H}(r_1,\ldots,r_\rho)))$ in the comparison model for {\sc{Sorting}} a sequence of $n$ numbers, in the worst case over instances covered by $\rho$ runs (increasing or decreasing) of sizes $r_1, \dots, r_\rho$, respectively, summing to $n$.  The {\sc{Sorting}} problem can be reduced in linear time to the problem of computing the {\sc{Convex Hulls}} of a chain of $n$ planar points that can be partitioned into $\rho$ simple subchains of sizes $r_1, \dots, r_\rho$, respectively. For each real number $r$, this is done by producing a point with $(x,y)$-coordinates $(r,r^2)$. The $\rho$ runs (alternating increasing and decreasing) are transformed into $\rho$ simple subchains of the same sizes. The sorted sequence of the numbers can be obtained from the {\sc{Convex Hull}} of the points in linear time. \qed
  \end{proof}
\end{PROOF}

Similarly to the computation of the \textsc{Maxima Set}, we define in the following section an algorithm for \textsc{Merging Upper Hulls}. This algorithm is a building block towards the synergistic algorithm that computes the \textsc{Convex Hull} of a set of planar points, and is more complex than that for \textsc{Merging Maxima}.

\subsection{Union of Upper Hulls}
\label{sec:UpperHullUnion}

We describe the \texttt{Quick Union Hull} algorithms which computes the \textsc{Upper Hull} of the union of $\rho$ upper hull sequences in the plane assuming that the upper hull sequences are given in sorted order by their $x$-coordinates. Given an upper hull sequence $\mathcal{U}_i$, let $\mathcal{U}_i[a]$ and $\mathcal{U}_i[b..c]$ denote the $a$-th point and the block of $c-b+1$ consecutive points corresponding to the positions from $a$ to $b$ in $\mathcal{U}_i$, respectively.  Given two points $p$ and $q$, let $m(p,q)$ denote the slope of the straight line that passes trough $p$ and $q$.

\subsubsection{Description of the algorithm Quick Union Hull.}

The \texttt{Quick Union Hull} algorithm is inspired by an algorithm described by Chan et al.~\cite{1997-DCG-PrimalDividingAndDualPruningOutputSensitiveConstructionOfFoudDimensionalPolytopesAndThreeDimensionalVoronoiDiagrams-ChanSnoeyinkYap}. It chooses an edge of slope $\mu$ from the upper hull sequences, and computes the point $p$ that has a supporting line of slope $\mu$. The algorithm then splits the points in the upper hull sequences by $p_x$. It computes the two tangents of $p$ with all the upper hull sequences: the one to the left of $p$ and the one to the right of $p$, and discards all the points below these tangents. The algorithm also computes a block of consecutive points in the upper hull sequence that contains $p$ which points are part of the output, and discards the points in this block ($p$ is in this block). (This last step is key to the optimality of the algorithm and is significantly more complex than its counterpart in the \textsc{Merging Maxima} solution.) The algorithm then recurses on the non-discarded points to the left of $p$ and on the non-discarded points to the right of $p$. All these steps take advantage of the fact that the points in the upper hull sequences are sorted in increasing order of their $x$-coordinates and the slopes of the edges of the upper hull sequences are monotonically decreasing from left to right. (See Algorithm~\ref{alg:quh} for a more formal description.)

\begin{algorithm}[t] 
  \caption{\texttt{Quick Union Hull}} 
  \label{alg:quh} 
  \begin{algorithmic}[1] 

    \REQUIRE{A set $\mathcal{U}_1, \dots, {\cal U}_\rho$ of $\rho$ upper hull sequences}
    \ENSURE{The \textsc{Upper Hull} of the union of the set $\mathcal{U}_1, \dots, {\cal U}_\rho$}
    
    \STATE Compute the median $\mu$ of the slopes of the middle edges of the
    upper hull sequences;

    \STATE Find the point $p$ that has a supporting line of slope $\mu$ through
    doubling searches for the value $\mu$ in the slopes of the edges of all upper hull sequences,
    starting at both ends in parallel, note $j\in[1..\rho]$
    the index of the upper hull sequence containing $p$;
    
    \STATE Perform doubling searches for the value $p_x$ in the
    $x$-coordinates of the points of all upper hull sequences except ${\cal U}_j$, starting at both
    ends in parallel;
    
    \STATE Find the two tangents of $p$ with all upper hull sequences: the one to the left of $p$
    and the one to the right of $p$, through doubling searches testing for each point $q$ the slope
    of the two edge that have $q$ as end point and the slope of the line $pq$ and discard
    the points below these tangents.

    \STATE Discard a block in ${\cal U}_j$ containing $p$ that form part of the output by
    computing the tangent between ${\cal U}_j$ and the upper hull sequences left of $p$ of
    minimum slope and the tangent between ${\cal U}_j$ and the upper hull sequences right of $p$
    of maximum slope.

    \STATE Repeat until there is no more than one upper hull sequence of size $1$:
    pair those left of $p$ and pair those right of $p$ and apply the
    Step $4$ to those pairs;

    \STATE Discard all points that lie below the lines that joins $p$ with the leftmost
    point and the rightmost point of the upper hull sequences;
    
    \STATE Recurse on the non-discarded points left and right of $p$.
  \end{algorithmic}
\end{algorithm}

In the following we describe in more details the Step $5$ of Algorithm~\ref{alg:quh}. We describe only how to compute a block of consecutive points to the right of $p$ in ${\cal U}_j$ that form part of the output, as the left counterpart is symmetric. Let $\tau$ be the tangent of maximum slope between ${\cal U}_j$ and the upper hull sequences to the right of $p$. Let $q$ be the point in ${\cal U}_j$ that lies in $\tau$.  Let $\lambda$ be the tangent of maximum slope among those computed at Step $4$.
\begin{LONG}
  All the points to the right of $p$ are below $\lambda$ and
\end{LONG}
$\lambda$ is a separating line between the portion of ${\cal U}_j$ that contains $q$ and the points to the right of $p$. Given two upper hull sequences ${\cal U}_i$ and ${\cal U}_k$ separated by a vertical line,  Barbay and Chen~\cite{2008-CCCG-ConvexHullOfTheUnionOfConvexObjectsInThePlane-BarbayChen} described an algorithm that computes the common tangent between ${\cal U}_i$ and ${\cal U}_k$ in time within $O(\log a + \log b)$, where ${\cal U}_i[a]$ and ${\cal U}_k[b]$ are the points that lie in the tangent. At each step this algorithm considers the points ${\cal U}_i[c]$ and ${\cal U}_k[d]$ and can certify at least in one upper hull sequence if the tangent is to the right or to the left of the point considered. A minor variant manages the case where the separating line is not vertical. Algorithm~\ref{alg:quh} executes several instances of this algorithm in parallel between ${\cal U}_j$ and all upper hull sequences to the right of $p$, always considering the same point in ${\cal U}_j$ (similarly to the Demaine et al.'s algorithm~\cite{2000-SODA-AdaptiveSetIntersectionsUnionsAndDifferences-DemaineLopezOrtizMunro} to compute the intersection of sorted set). Once all parallel decisions about the point ${\cal U}_j[c]$ are made, the instances can be divided into two sets: (i) those whose tangents are to the left of ${\cal U}_j[c]$ and (ii) those whose tangents are to the right of ${\cal U}_j[c]$. Algorithm~\ref{alg:quh} stops the parallel computation of tangents in the set of maxima sequences (ii). The Step $5$ continues until there is just one instance running and computes the tangent $\tau$ in this instance.

\subsubsection{Analysis of the Quick Union Hull Algorithm.}
\label{sec:analysisQUH}


Similarly to the case of \textsc{Merging Maxima}, every algorithm for \textsc{Merging Upper Hulls} needs to certify that some blocks of the upper hull sequences can not participate in the \textsc{Upper Hull} of the union, and that some blocks of the upper hull sequences are in the \textsc{Upper Hull} of the union.
\begin{INUTILE}
  that computes the \textsc{Upper Hull} of the union of $\rho$ upper hull sequences needs to certify that some blocks of those can not participate in the \textsc{Upper Hull} of the union, and that some blocks of those are in the \textsc{Upper Hull} of the union.
\end{INUTILE}
In the following we formalize the notion of a \emph{certificate} for \textsc{Merging Upper Hulls} problem.

\begin{definition}
  Given the points $\mathcal{U}_i[a]$ and $\mathcal{U}_j[b]$, let $\ell$ be the straight line that passes through $\mathcal{U}_i[a]$ and $\mathcal{U}_j[b]$ and let $m_\ell$ be the slope of $\ell$. $\langle \mathcal{U}_i[a], \mathcal{U}_j[b] \supset \mathcal{U}_k[c..d..e] \rangle$ is an \emph{Elementary Eliminator Argument} if all the points of the block $\mathcal{U}_k[c..e]$ are between the vertical lines through $\mathcal{U}_i[a]$ and $\mathcal{U}_j[b]$, $m(\mathcal{U}_k[d-1], \mathcal{U}_k[d]) \ge m_\ell \ge m(\mathcal{U}_k[d], \mathcal{U}_k[d+1])$, and the point $\mathcal{U}_k[d]$ lies below $\ell$.
\end{definition}

If $\langle \mathcal{U}_i[a], \mathcal{U}_j[b] \supset \mathcal{U}_k[c..d..e] \rangle$ is an elementary eliminator argument then the points in the block $\mathcal{U}_k[c..e]$ can not participate in the \textsc{Upper Hull} of the union.
\begin{LONG}
  \begin{lemma}
    An elementary eliminator argument $\langle \mathcal{U}_i[a], \mathcal{U}_j[b] \supset \mathcal{U}_k[c..d..e] \rangle$ can be checked in constant time.
  \end{lemma}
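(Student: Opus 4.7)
The plan is to show that the three conditions in the definition of an elementary eliminator argument can each be verified by inspecting only a constant number of points, and that the concavity of the upper hull sequence $\mathcal{U}_k$ lifts these pointwise checks into a statement about the whole block $\mathcal{U}_k[c..e]$.

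First I would list the constantly many points that are actually needed by the verifier: $\mathcal{U}_i[a]$, $\mathcal{U}_j[b]$, the two endpoints $\mathcal{U}_k[c]$ and $\mathcal{U}_k[e]$, and the three consecutive points $\mathcal{U}_k[d-1]$, $\mathcal{U}_k[d]$, $\mathcal{U}_k[d+1]$. The slope $m_\ell$ of the line $\ell$ joining $\mathcal{U}_i[a]$ and $\mathcal{U}_j[b]$ can be computed from two coordinates, hence in $O(1)$ time.

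Next, for the horizontal-strip condition, I would verify the two inequalities $\mathcal{U}_i[a]_x\le \mathcal{U}_k[c]_x$ and $\mathcal{U}_k[e]_x\le \mathcal{U}_j[b]_x$ (or the symmetric pair, depending on which of $\mathcal{U}_i[a],\mathcal{U}_j[b]$ is leftmost). Since the points of an upper hull sequence are stored in increasing order of $x$-coordinate, every intermediate $\mathcal{U}_k[t]$ with $c\le t\le e$ automatically satisfies $\mathcal{U}_i[a]_x\le \mathcal{U}_k[t]_x\le \mathcal{U}_j[b]_x$, so only the two endpoints need to be inspected. For the slope-bracketing condition I would just compute the two edge slopes $m(\mathcal{U}_k[d-1],\mathcal{U}_k[d])$ and $m(\mathcal{U}_k[d],\mathcal{U}_k[d+1])$ and compare them to $m_\ell$. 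For the below-the-line condition, I would evaluate the equation of $\ell$ at $x=\mathcal{U}_k[d]_x$ and compare with $\mathcal{U}_k[d]_y$. All four tests take $O(1)$ time.

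The step I expect to require the most care is justifying that the single point $\mathcal{U}_k[d]$ being below $\ell$ implies the entire block $\mathcal{U}_k[c..e]$ is below $\ell$. Here I would invoke the defining property of an upper hull sequence: the slopes of its consecutive edges decrease monotonically. The slope-bracketing condition on $d$ identifies $\mathcal{U}_k[d]$ as the vertex of $\mathcal{U}_k$ whose signed vertical distance to any line of slope $m_\ell$ is maximum (the function $t\mapsto \mathcal{U}_k[t]_y - m_\ell\,\mathcal{U}_k[t]_x$ is concave in $t$, with maximum exactly where the edge slopes flip from $\ge m_\ell$ to $\le m_\ell$). Consequently, if $\mathcal{U}_k[d]$ lies below $\ell$, so does every other point of $\mathcal{U}_k$, and in particular every point of $\mathcal{U}_k[c..e]$, which certifies that the block is eliminated.

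Putting the three checks together, the verifier inspects at most seven points and performs $O(1)$ arithmetic operations and comparisons, so the argument is checkable in constant time, as claimed.
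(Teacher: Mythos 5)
Your verification is correct and takes the same (in the paper, omitted as immediate) route: each of the three conditions in the definition involves only a constant number of points once one observes that the points of $\mathcal{U}_k$ are sorted by $x$-coordinate, so testing the two endpoints $\mathcal{U}_k[c]$ and $\mathcal{U}_k[e]$ suffices for the vertical-strip condition, and the slope-bracketing and below-$\ell$ tests each need $O(1)$ arithmetic. Your final concavity paragraph is not needed for the constant-time-checkability claim itself --- it rather proves the paper's subsequent (also unproved) assertion that a valid elementary eliminator argument implies the whole block $\mathcal{U}_k[c..e]$ cannot appear in the \textsc{Upper Hull} of the union --- but it is correct and a welcome addition.
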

\end{LONG}
Several blocks that are ``eliminated'' by the same pair of points can be combined into a single argument, a notion captured by the \emph{block eliminator argument}.

\begin{definition}
  $\langle \mathcal{U}_i[a], \mathcal{U}_j[b] \supset \mathcal{U}_{k_1}[c_1..d_1..e_1], \dots, \mathcal{U}_{k_t}[c_t..d_t..e_t] \rangle$ is a \emph{Block Eliminator Argument} if $\langle \mathcal{U}_i[a], \mathcal{U}_j[b] \supset \mathcal{U}_{k_1}[c_1..d_1..e_1] \rangle, \dots, \langle \mathcal{U}_i[a], \mathcal{U}_j[b] \supset \mathcal{U}_{k_t}[c_t..d_t..e_t] \rangle$ are elementary eliminator arguments.
\end{definition}

\begin{LONG}
A block eliminator argument is checked by checking each elementary eliminator argument that form it.
  
  \begin{corollary}
    A block eliminator argument $\langle \mathcal{U}_i[a], \mathcal{U}_j[b] \supset \mathcal{U}_{k_1}[c_1..d_1..e_1], \dots,\mathcal{U}_{k_t}[c_t..d_t..e_t] \rangle$ can be checked in time within $O(t)$.
  \end{corollary}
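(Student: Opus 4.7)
The plan is essentially immediate from the definition and the preceding elementary lemma: a block eliminator argument is, by construction, the conjunction of $t$ elementary eliminator arguments that all share the same pair of ``upper'' points $\mathcal{U}_i[a]$ and $\mathcal{U}_j[b]$. Since each elementary eliminator argument can be checked in constant time, verifying the block amounts to verifying $t$ such arguments, which takes time within $O(t)$.

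Concretely, I would first compute, in constant time, the line $\ell$ through $\mathcal{U}_i[a]$ and $\mathcal{U}_j[b]$ together with its slope $m_\ell$; this data is common to all $t$ sub-arguments and is therefore paid for only once. Then, for each index $s \in \{1,\dots,t\}$, I would verify the three conditions of the elementary eliminator argument $\langle \mathcal{U}_i[a], \mathcal{U}_j[b] \supset \mathcal{U}_{k_s}[c_s..d_s..e_s] \rangle$: (i) the endpoints $\mathcal{U}_{k_s}[c_s]$ and $\mathcal{U}_{k_s}[e_s]$ lie between the vertical lines through $\mathcal{U}_i[a]$ and $\mathcal{U}_j[b]$; (ii) the slope sandwich $m(\mathcal{U}_{k_s}[d_s-1], \mathcal{U}_{k_s}[d_s]) \ge m_\ell \ge m(\mathcal{U}_{k_s}[d_s], \mathcal{U}_{k_s}[d_s+1])$ holds; and (iii) the point $\mathcal{U}_{k_s}[d_s]$ lies strictly below $\ell$. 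Each of these three checks inspects a constant number of coordinates and performs a constant number of arithmetic comparisons, so the per-index cost is $O(1)$.

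Summing the $O(1)$ cost over the $t$ sub-arguments, plus the $O(1)$ preprocessing for $\ell$, yields the claimed $O(t)$ total. There is no real obstacle here: the statement is a direct corollary of the definition of a block eliminator argument together with the constant-time checkability of its elementary components; the only (minor) subtlety is to note that the line $\ell$ can be shared across the $t$ checks so that it need not be recomputed each time, which avoids an unnecessary (though still constant per-argument) overhead.
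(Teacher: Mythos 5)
Your proposal is correct and matches the paper's argument: the paper likewise checks a block eliminator argument by verifying each of its $t$ constituent elementary eliminator arguments, each in constant time (using the shared line $\ell$ and the $x$-sortedness of the sequences to reduce the "between the vertical lines" condition to endpoint checks), giving $O(t)$ overall.
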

\end{LONG}

As for \textsc{Merging Maxima}, any correct algorithm for \textsc{Merging Upper Hulls} must certify that some points are part of the output.

\begin{definition}
  $\langle \mathcal{U}_i[a] \dashv \mathcal{U}_{j_1}[b_1], \dots, \mathcal{U}_{j_t}[b_t] \rangle$ is an \emph{Elementary Convex Argument} if there exists a straight line $\ell$ that passes through $\mathcal{U}_i[a]$ of slope $m_\ell$ such that $m(\mathcal{U}_{j_1}[b_1-1], \mathcal{U}_{j_1}[b_1]) \ge m_\ell \ge m(\mathcal{U}_{j_1}[b_1], \mathcal{U}_{j_1}[b_1+1]), \dots, m(\mathcal{U}_{j_t}[b_t-1], \mathcal{U}_{j_t}[b_t]) \ge m_\ell \ge m(\mathcal{U}_{j_t}[b_t], \mathcal{U}_{j_t}[b_t+1])$; $m(\mathcal{U}_i[a-1], \mathcal{U}_i[a]) \ge m_\ell \ge m(\mathcal{U}_i[a], \mathcal{U}_i[a+1])$; and the points $\mathcal{U}_{j_1}[b_1], \dots, \mathcal{U}_{j_t}[b_t]$ lie below $\ell$.
\end{definition}

If $\langle \mathcal{U}_i[a] \dashv \mathcal{U}_{j_1}[b_1], \dots, \mathcal{U}_{j_t}[b_t] \rangle$ is an elementary convex argument, then the point $\mathcal{U}_i[a]$ is in the \textsc{Upper Hull} of the union of the upper hulls $\mathcal{U}_i, \mathcal{U}_{j_1}, \dots, \mathcal{U}_{j_t}$.
\begin{LONG}
  \begin{lemma}
    An \emph{elementary convex argument} $\langle \mathcal{U}_i[a] \dashv \mathcal{U}_{j_1}[b_1], \dots, \mathcal{U}_{j_t}[b_t] \rangle$ can be checked in time within $O(t)$.
  \end{lemma}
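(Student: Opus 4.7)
The plan is to show that verification of the three conditions in the definition reduces to $O(1)$ work per index $k \in \{1,\ldots,t\}$, plus $O(1)$ work for the index at $\mathcal{U}_i[a]$, giving $O(t)$ overall. The key observation is that the first two conditions restrict the admissible slope $m_\ell$ to an intersection of closed intervals, one per point involved, and each interval can be computed in constant time from the two edges incident to that point.

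Concretely, I would first iterate once through the $t+1$ points $\mathcal{U}_i[a], \mathcal{U}_{j_1}[b_1], \ldots, \mathcal{U}_{j_t}[b_t]$ and compute, in $O(t)$ time, the slope interval $[M, m]$ where
\[
M = \max\bigl\{ m(\mathcal{U}_i[a], \mathcal{U}_i[a{+}1]),\ m(\mathcal{U}_{j_1}[b_1], \mathcal{U}_{j_1}[b_1{+}1]),\ \ldots,\ m(\mathcal{U}_{j_t}[b_t], \mathcal{U}_{j_t}[b_t{+}1]) \bigr\}
\]
and
\[
m = \min\bigl\{ m(\mathcal{U}_i[a{-}1], \mathcal{U}_i[a]),\ m(\mathcal{U}_{j_1}[b_1{-}1], \mathcal{U}_{j_1}[b_1]),\ \ldots,\ m(\mathcal{U}_{j_t}[b_t{-}1], \mathcal{U}_{j_t}[b_t]) \bigr\}.
\]
If $M > m$ the argument is invalid; otherwise any $m_\ell \in [M, m]$ (for definiteness, $m_\ell = M$) simultaneously satisfies the two slope conditions of the definition, by construction of $M$ and $m$.

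Next, with $m_\ell$ fixed, I would verify the third condition: that each $\mathcal{U}_{j_k}[b_k]$ lies strictly below the line $\ell$ through $\mathcal{U}_i[a]$ of slope $m_\ell$. This is a single sign-of-determinant test per $k$, hence $O(1)$ per point and $O(t)$ overall. The output is \textsc{accept} if both the interval test and all $t$ below-$\ell$ tests succeed, and \textsc{reject} otherwise. Since every step of the verification touches each of the $t+1$ involved points a constant number of times and performs only arithmetic and comparisons on constant-size coordinates, the total running time is within $O(t)$.

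The only subtlety worth checking is that choosing $m_\ell = M$ is sound: any valid witness slope $m_\ell^\star \ge M$ makes the line through $\mathcal{U}_i[a]$ steeper and therefore only \emph{raises} the value of $\ell$ at the $x$-coordinate of any $\mathcal{U}_{j_k}[b_k]$ with $x$-coordinate greater than $(\mathcal{U}_i[a])_x$ (and symmetrically lowers it on the other side). Since the definition requires $\mathcal{U}_{j_k}[b_k]$ to lie below $\ell$, the below-$\ell$ test using the smallest admissible slope $M$ is the most stringent; hence if it passes for $m_\ell = M$ then an admissible witness exists, and if it fails for $m_\ell = M$ then no $m_\ell \in [M,m]$ can work on the relevant side. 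Handling both sides of $\mathcal{U}_i[a]$ symmetrically (using $M$ for points to the right and $m$ for points to the left, or simply testing each $k$ with a suitable endpoint) requires no more than constant extra work per $k$. I do not anticipate a real obstacle in this argument; the lemma is essentially a bookkeeping statement about the definition.
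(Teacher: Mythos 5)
Your reduction of the two slope-bracketing conditions to the interval $[M,m]$ is fine, but the way you then discharge the existential quantifier over $m_\ell$ breaks the checker. The error is the sentence claiming that if the below-$\ell$ test fails at $m_\ell=M$ then no $m_\ell\in[M,m]$ can work on that side: that is backwards. For a point $q=\mathcal{U}_{j_k}[b_k]$ with $q_x>p_x$ (writing $p=\mathcal{U}_i[a]$), the test at the \emph{smallest} admissible slope $M$ is the \emph{most} stringent one, so failing it rules out nothing — a larger admissible slope raises the line at $q_x$ and may well place $q$ below it. Concretely, take $p=(0,0)$ and $q=(1,-1)$, each with incident edge slopes $+10$ and $-10$, so $[M,m]=[-10,10]$; the argument is valid (the horizontal line $m_\ell=0$ works), yet at slope $M=-10$ the line passes through $(1,-10)$ and $q$ lies above it, so your procedure rejects a valid elementary convex argument. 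Testing right points at $M$ and left points at $m$ is thus only a sufficient condition (your soundness direction is correct), not a decision procedure; the symmetric variant (right points at $m$, left points at $M$) fails in the other direction, since each side may be satisfiable separately without a common witness.

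The repair is easy and stays within $O(t)$: the below-$\ell$ constraint at each point is itself a slope interval, namely $m_\ell>m(p,q)$ when $q_x>p_x$ and $m_\ell<m(p,q)$ when $q_x<p_x$, each computable in constant time. A witness slope exists iff the intersection of all these constraints with $[M,m]$ is nonempty, i.e.\ iff $\max\bigl(M,\max_{q:\,q_x>p_x} m(p,q)\bigr)$ is (strictly, where appropriate) below $\min\bigl(m,\min_{q:\,q_x<p_x} m(p,q)\bigr)$ — one pass over the $t$ points plus a constant number of comparisons. The paper states this lemma without proof, treating it as bookkeeping; the interval-intersection formulation above is the bookkeeping that actually decides validity, whereas fixing a single candidate slope in advance does not.
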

\end{LONG}
There are blocks that can be ``easily'' certified that form part of the output.

\begin{definition}\label{def:blockConvex}
  Given the points $\mathcal{U}_i[a]$ and $\mathcal{U}_i[b]$, let $\ell$ be the straight line that passes through $\mathcal{U}_i[a]$ and $\mathcal{U}_i[b]$ and let $m_\ell$ be the slope of $\ell$.  $\langle \mathcal{U}_i[a..b] \dashv \mathcal{U}_{j_1}[c_1..d_1..e_1],\begin{SHORT}\\\end{SHORT} \dots, \mathcal{U}_{j_t}[c_t..d_t..e_t] \rangle$ is a \emph{Block Convex Argument} if $\langle \mathcal{U}_i[a] \dashv \mathcal{U}_{j_1}[c_1], \dots, \mathcal{U}_{j_t}[c_t] \rangle$ and $\langle \mathcal{U}_i[b] \dashv \mathcal{U}_{j_1}[e_1], \dots, \mathcal{U}_{j_t}[e_t] \rangle$ are elementary convex arguments; $m(\mathcal{U}_{j_1}[d_1-1], \mathcal{U}_{j_1}[d_1]) \ge m_\ell \ge m(\mathcal{U}_{j_1}[d_1], \mathcal{U}_{j_1}[d_1+1]),\dots, m(\mathcal{U}_{j_t}[d_t-1], \mathcal{U}_{j_t}[d_t]) \ge m_\ell \ge m(\mathcal{U}_{j_t}[d_t], \mathcal{U}_{j_1}[d_t+1])$, and the points $\mathcal{U}_{j_1}[d_1], \dots, \mathcal{U}_{j_t}[d_t]$ lie below $\ell$.
\end{definition}

If $\langle \mathcal{U}_i[a..b] \dashv \mathcal{U}_{j_1}[c_1..d_1..e_1], \dots, \mathcal{U}_{j_t}[c_t..d_t..e_t] \rangle$ is a block convex argument then the points in the block $\mathcal{U}_i[a..b]$ are in the \textsc{Upper Hull} of the union of the upper hulls $\mathcal{U}_i, \mathcal{U}_{j_1}, \dots, \mathcal{U}_{j_t}$.

\begin{LONG}
  \begin{lemma}
    A \emph{block convex argument} $\langle \mathcal{U}_i[a..b] \dashv \mathcal{U}_{j_1}[c_1..d_1..e_1], \dots, \mathcal{U}_{j_t}[c_t..d_t..e_t] \rangle$ can be checked in time within $O(t)$.
  \end{lemma}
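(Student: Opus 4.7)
The plan is to break the check into three pieces and observe that each is either constant work per upper hull sequence or a single elementary convex argument check that the previous lemma already bounds by $O(t)$.

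First, I would verify the two elementary convex arguments $\langle \mathcal{U}_i[a] \dashv \mathcal{U}_{j_1}[c_1], \dots, \mathcal{U}_{j_t}[c_t] \rangle$ and $\langle \mathcal{U}_i[b] \dashv \mathcal{U}_{j_1}[e_1], \dots, \mathcal{U}_{j_t}[e_t] \rangle$ that are required by Definition~\ref{def:blockConvex}. By the preceding lemma bounding the cost of checking an elementary convex argument, each of these two checks takes time within $O(t)$, contributing $O(t)$ in total.

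Next, I would compute in constant time the slope $m_\ell$ of the line $\ell$ through the two points $\mathcal{U}_i[a]$ and $\mathcal{U}_i[b]$, which are already available as the endpoints of the claimed block. For each index $s \in [1..t]$ I would then perform three tests of constant cost: the slope comparisons $m(\mathcal{U}_{j_s}[d_s-1], \mathcal{U}_{j_s}[d_s]) \ge m_\ell \ge m(\mathcal{U}_{j_s}[d_s], \mathcal{U}_{j_s}[d_s+1])$, and the side-of-line test verifying that $\mathcal{U}_{j_s}[d_s]$ lies below $\ell$. Summing over the $t$ indices, this third piece contributes $O(t)$.

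Adding the three pieces yields a total cost within $O(t) + O(t) + O(t) = O(t)$, as claimed. I do not foresee a real obstacle here: the only subtlety is remembering that the two elementary convex arguments at the endpoints $a$ and $b$ already cover all the ``frontier'' conditions on the $c_s$ and $e_s$, so the per-block extra work reduces to the constant-time middle-point conditions on $d_s$ for each $s$.
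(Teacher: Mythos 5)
Your proposal is correct and matches the intended argument: the check is just an unpacking of Definition~\ref{def:blockConvex}, namely two elementary convex argument checks at the endpoints (each $O(t)$ by the preceding lemma) plus, for each of the $t$ sequences, constant-time slope comparisons against $m_\ell$ and a constant-time below-$\ell$ test for $\mathcal{U}_{j_s}[d_s]$. The paper states the lemma without an explicit proof, and this is precisely the verification it has in mind.
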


   Similar to the \textsc{Merging Maxima}, the difficulty of finding and describing block eliminator and block convex arguments depend on the points they refer to in the upper hull sequences, a notion captured by ``argument points'':
\end{LONG}

\begin{definition}
  Given an argument $\langle \mathcal{U}_i[a], \mathcal{U}_j[b] \supset \mathcal{U}_{k_1}[c_1..d_1..e_1], \dots, \mathcal{U}_{k_t}[c_t..d_t..e_t] \rangle$ or $\langle \mathcal{U}_i[a..b] \dashv \mathcal{U}_{j_1}[c_1..d_1..e_1], \dots, \mathcal{U}_{j_t}[c_t..d_t..e_t] \rangle$, the \emph{Argument Points} are the points $\mathcal{U}_i[a]$ and $\mathcal{U}_i[b]$.
\end{definition}

Those atomic arguments can be checked in time proportional to the number of blocks in them, and combine into a general definition of a certificate that any correct algorithm for \textsc{Merging Upper Hulls} in the algebraic decision tree computational model can be modified to output.

\begin{definition}
  Given a set of upper hull sequences and their \textsc{Upper Hull} $\mathcal{U}$ expressed as several blocks on the upper hull sequences. A \emph{certificate} of $\mathcal{U}$ is a set of block eliminator and block convex arguments such that the \textsc{Upper Hull} of any instance satisfying those arguments is given by the description of $\mathcal{U}$. The length of a certificate is the number of distinct argument points in it.
\end{definition}

Similarly to the \textsc{Merging Maxima}, the key of the analysis is to separate the doubling search steps from the other steps of the algorithm.   

\begin{theorem}\label{theo:quh}
  Given $\rho$ upper hull sequences. The time complexity of the \texttt{Quick Union Hull} algorithm is within $O(\sum_{j=1}^{\beta}\log{s_{j}} + \sum^{\delta}_{i=1}\log{\binom{\rho}{m_i}})$ when it computes the \textsc{Upper Hull} of the union of these upper hull sequences; where $\beta$ is the number of blocks in the certificate ${\cal C}$ computed by the algorithm; $s_1, \dots, s_\beta$ are the sizes of these blocks; $\delta$ is the length of ${\cal C}$; and $m_1, \dots, m_\delta$ is a sequence where $m_i$ is the number of upper hull sequences whose blocks form the $i$-th argument of ${\cal C}$.
\end{theorem}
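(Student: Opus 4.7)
The plan is to mirror the two-lemma decomposition used for \texttt{Quick Union Maxima}: split the total cost into (i) the cost of all doubling and Barbay--Chen searches that probe positions \emph{inside} an individual upper hull sequence $\mathcal{U}_k$ (Steps 2, 3, 4 and the tangent computations of Step 5), which I will bound by $O(\sum_{j=1}^{\beta}\log s_j)$, and (ii) the cost of the ``per-recursive-call'' coordination work (the median of Step 1, the choice of $p$ in Step 2, the pairing and parallel-tangent loop of Step 6, and the final line-elimination of Step 7), which is linear in the number $k$ of non-empty sequences at the current call and which I will bound by $O(\sum_{i=1}^{\delta}\log\binom{\rho}{m_i})$.

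For part (i) I would argue that each doubling search that probes $\mathcal{U}_k$ stops at a boundary between two consecutive blocks of the final partition of $\mathcal{U}_k$ produced by the algorithm, each such block being either certified by a block eliminator argument (points below some tangent through $p$ or below the line $p\ell$/$pr$) or by a block convex argument (the output block around $p$ in $\mathcal{U}_j$). Fixing $\mathcal{U}_k$, the two doubling searches that meet at a freshly discarded block $b$ cost $O(\log s_b+\log s)$ where $s$ is the smaller of the two cumulative sides, because both are run from opposite ends in parallel. A straightforward induction over the recursion tree of searches in $\mathcal{U}_k$, identical to the one used in the maxima analysis, then telescopes to $O(\sum_{j}\log s_j)$ summed only over the blocks of $\mathcal{U}_k$; summing over all upper hulls gives the first term. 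The Barbay--Chen tangent computations are handled analogously: each step either advances a cursor inside some $\mathcal{U}_k$ by a factor of two (and is charged to the logarithm of the block it eventually carves off) or is a decision shared across all still-active sequences in the parallel batch (and is charged to the per-call budget of~(ii)).

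For part (ii) I let $\mathcal{T}(\delta,k)$ be the total cost of the coordination steps when the subinstance has $k$ non-empty upper hull sequences and its certificate has length $\delta$. The chosen edge of median slope $\mu$ guarantees that if $c$ and $d$ are the number of sequences lying entirely strictly left and strictly right of $p_x$ after the discards, and $b$ is the number spanning both sides, then $c\le d+b$ and $d\le c+b$ (the median property is preserved through the discard step since discarded sequences are dominated, not one-sided). Writing $\delta_c+\delta_d\le\delta$ for the certificate lengths in the two children, the recurrence
\[
\mathcal{T}(\delta,k)\le \mathcal{T}(\delta_c,c+b)+\mathcal{T}(\delta_d,d+b)+O(k)
\]
admits the solution $\mathcal{T}(\delta,k)\in O\!\bigl(\sum_{i=1}^{\delta}m_i\log(k/m_i)\bigr)=O\!\bigl(\sum_{i=1}^{\delta}\log\binom{k}{m_i}\bigr)$ by the same induction as in the maxima case, using $\log(1+y/x)^x\ge y$ whenever $y\le x$. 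Summing (i) and (ii) yields the stated bound.

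The hard part is part (i) restricted to Step 5. Unlike in \texttt{Quick Union Maxima}, the block around $p$ in $\mathcal{U}_j$ that belongs to the output is not carved out by a simple $y$-coordinate search but by two parallel batches of Barbay--Chen tangent computations, one per side of $p$. I will need to verify carefully that the ``shared probe in $\mathcal{U}_j$'' trick (inherited from Demaine et al.) ensures that each comparison either (a) advances a cursor in some $\mathcal{U}_k$, and hence can be charged to the logarithm of a block in $\mathcal{U}_k$ via the same tree-amortization as above, or (b) is a group decision on the common point of $\mathcal{U}_j$ and can therefore be charged once to the $O(k)$ budget of the current recursive call, to be absorbed by the bound of (ii). Once this charging is pinned down, both sums combine to give the claimed bound.
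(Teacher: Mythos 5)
Your proposal is correct and takes essentially the same route as the paper: the paper proves Theorem~\ref{theo:quh} only by asserting that it is ``very similar'' to the maxima-merging analysis, i.e., the same split into doubling searches charged to block sizes via the tree amortization of Lemma~\ref{lem:blocks} and per-call linear coordination work charged to $\sum_{i=1}^{\delta}\log\binom{\rho}{m_i}$ via the recurrence $\mathcal{T}(\delta,k)\le\mathcal{T}(\delta_c,c+b)+\mathcal{T}(\delta_d,d+b)+O(k)$ of Lemma~\ref{lem:sequences}, which is exactly your parts (i) and (ii). Your explicit attention to how the parallel Barbay--Chen tangent computations of Step~5 are charged (cursor advances to block logarithms, shared decisions on the common point of $\mathcal{U}_j$ to the per-call budget) is the one place where extra care is genuinely needed and goes beyond the level of detail the paper itself provides.
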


The optimality of this algorithm is a consequence of the fact that it checks each argument of any certificate using a constant number of argument points.

\begin{lemma}\label{lem:opt-hull}
  The algorithm \texttt{Quick Union Hull} computes a certificate
  which length is a constant factor of the length of the certificate of minimal length.
\end{lemma}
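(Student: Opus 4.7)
\begin{PROOF}
\begin{proof}
The plan is to mirror the strategy of Lemma~\ref{lem:opt-max}, exhibiting a reference algorithm whose certificate is of minimum length and comparing it, step by step, with the certificate produced by \texttt{Quick Union Hull}. The reference algorithm I would use is a straightforward upper-hull analog of the \texttt{Left-to-Right} algorithm: at each iteration it selects the non-discarded point $u$ of maximum $y$-coordinate on the upper envelope of the current sub-instance, identifies the block of consecutive points of the sequence $\mathcal{U}_j$ containing $u$ that still forms part of the output, and discards from the other sequences the blocks of points lying below the two tangents from $u$. By choosing $u$ maximal, every other certificate of correctness must contain either $u$ or points that can play the same role, and every elementary eliminator or block convex argument used must mention at least one new argument point; this establishes that the reference certificate has minimum length up to a constant factor.

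Next I would argue, iteration by iteration, that the point $p$ computed by \texttt{Quick Union Hull} at Step~2 lies inside the very block $b \subset \mathcal{U}_j$ that the reference algorithm identifies at the corresponding iteration, and that the two tangents computed at Steps~4 and~5 produce contact points that coincide, up to a constant shift, with the endpoints of the eliminator and convex arguments produced by the reference algorithm. The key geometric observation is that, by construction, the supporting line of slope $\mu$ through $p$ lies above the supporting lines of slope $\mu$ through any other non-discarded point; so $p$ is on the \textsc{Upper Hull} of the union and the tangents of $p$ with each other sequence $\mathcal{U}_k$ bound from above exactly the blocks of $\mathcal{U}_k$ that must be eliminated with respect to $b$. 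In particular, whenever the reference algorithm uses a single pair of argument points $(h_L, h_R)$ to justify an elementary eliminator argument, \texttt{Quick Union Hull} uses at most the three points $p$ and the two tangent contact points, which is a constant number.

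The hard part will be handling the block convex argument produced at Step~5, which is significantly more involved than its maxima counterpart: the two endpoints of the block $b \ni p$ are not obtained by a simple dominance test but by the parallel tangent computation of Barbay and Chen, and one must verify that the contact points of these tangents correspond to argument points of the reference certificate rather than to new ones. I would handle this by a case analysis on the slopes of the edges adjacent to the contact points in $\mathcal{U}_j$ and in each $\mathcal{U}_k$, showing that these contact points either coincide with, or sandwich a single edge of, the endpoints $h_L$ and $h_R$ of the reference block; otherwise, the edge in question would contradict either the maximality of $b$ in the reference certificate or the choice of $\mu$ as the median slope in \texttt{Quick Union Hull}.

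Summing over all iterations of the recursion, each argument point of the reference (minimum) certificate is charged by at most a constant number of argument points in the certificate produced by \texttt{Quick Union Hull}, which yields the claimed constant factor and concludes the proof.\qed
\end{proof}
\end{PROOF}
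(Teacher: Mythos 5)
Your overall route is the one the paper itself intends: the paper gives no separate argument for this lemma (it only says the proofs in Section~\ref{sec:convex} are very similar to the maxima case), so the intended proof is exactly your plan of mirroring Lemma~\ref{lem:opt-max} --- exhibit a reference algorithm whose certificate is of minimal length, then charge a constant number of argument points of \texttt{Quick Union Hull} to each argument of that certificate. The problem is that the two places where you defer the work are where the content of the lemma lives. The minimality of your reference certificate is asserted, not argued, and your reference algorithm is not the hull analog of \texttt{Left-to-Right}: in the maxima case the paper proves minimality by a greedy-exchange argument that relies on the left-to-right sweep (at each step the certified block is the longest block any certificate could certify, because the second-highest leftmost point bounds it, so every certificate needs at least as many argument points). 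Your rule of repeatedly taking the global maximum-$y$ point is a quickhull-style selection; after one step the instance splits into a left and a right part, the blocks you certify are no longer maximal in this greedy sense, and the sentence ``every elementary eliminator or block convex argument used must mention at least one new argument point'' does not follow from the choice of $u$. Either use the genuine sweep analog (walk along the upper envelope by $x$-coordinate, certifying at each step the longest possible output block of the current sequence together with the blocks eliminated below the two tangents) or actually prove minimality for your selection rule; as written this step would fail, and without a provably minimal reference certificate the charging argument proves nothing.

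The second weak point is the claimed alignment ``at the corresponding iteration'': the median-slope recursion of Algorithm~\ref{alg:quh} and the iterations of a left-to-right (or max-$y$) reference are not synchronized in any way, so there is no corresponding iteration. Fortunately you do not need it, and the maxima proof does not use it either: since $p$ has a supporting line of slope $\mu$ it is an output point, hence it lies in \emph{some} block convex argument of the minimal certificate, say with block $b\subseteq\mathcal{U}_j$; Step 5 of \texttt{Quick Union Hull} then discards a block of $\mathcal{U}_j$ containing $b$ while introducing only $O(1)$ argument points, and the tangent-based discards of Steps 4 and 7 intersect each block eliminator argument of the minimal certificate in $O(1)$ further argument points. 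Rephrase your charging in those terms (each argument of the minimal certificate is charged $O(1)$ argument points of the algorithm, summed over the recursion) and drop the iteration correspondence; that is the statement the paper's remark ``it checks each argument of any certificate using a constant number of argument points'' is alluding to.
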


In the following section we describe a synergistic results that combines the results of Sections~\ref{sec:inputOrderAdaptiveConvexHull} and~\ref{sec:UpperHullUnion}.

\subsection{Synergistic Computation of Upper Hulls}
\label{sec:synergisticUpperHull}

The \texttt{(Simple,Structure) Synergistic Hull} algorithm proceeds in two phases. It first decomposes the input into simple subchains of consecutive positions using a (new) linear time doubling search inspired partitioning algorithm, that searches for simple chains from left to right (see \begin{LONG}Algorithm~\ref{alg:dsp}\end{LONG}\begin{SHORT}Appendix~\ref{app:simple}\end{SHORT} for a detail description of the algorithm). It computes their upper hull sequences, and then merges those using the \texttt{Quick Union Hull} algorithm described previously.

\begin{LONG}
  \begin{algorithm} 
    \caption{\texttt{Doubling Search Partition}} 
    \label{alg:dsp} 
    \begin{algorithmic}[1] 
      \REQUIRE{A sequence of $n$ planar points $p_1, \dots, p_n$} \ENSURE{A sequence of simple polygonal chains}
    
      \STATE Initialize $i$ to $1$;
    
      \FOR{$t = 1, 2, \dots$ } \IF{$i+2^t > n$ \OR the chain $p_i, \dots, p_{i+2^t}$ is \NOT simple} \STATE {Add the chain $p_i, \dots, p_{i+2^{t-1}}$ to the output} \STATE {Update $i \leftarrow i+2^{t-1} + 1$} \STATE {Reset $t \leftarrow 1$}
      \ENDIF
      \ENDFOR
    \end{algorithmic}
  \end{algorithm}

  The \texttt{Doubling Search Partition} algorithm partitions the polygonal chain into simple subchains which sizes has asymptotically minimum entropy among all the partitions into simple subchains. The following lemma formalizes this fact.

\begin{lemma}
  Given a sequence $S$ of $n$ planar points. The \texttt{Doubling Search Partition} algorithm computes in linear time a partition of $S$ into $k$ simple polygonal chain of consecutive points, of sizes $n_1, \dots, n_k$ such that $n(1+\mathcal{H}(n_1, \dots, n_{k})) \in O(n(1+\alpha))$, where $\alpha$ is the minimal entropy $\min\{\mathcal{H}(n_1, \dots, n_{\kappa})$ of any partition of $S$ into $\kappa$ simple subchains of consecutive positions, of respective sizes $n_1,\ldots,n_\kappa$, and $\mathcal{H}(n_1, \dots, n_\kappa) = \sum_{i=1}^\kappa{\frac{n_i}{n}}\log{\frac{n}{n_i}}\}$.
\end{lemma}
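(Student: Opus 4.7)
I would prove two claims separately: linear running time, and entropy $O(1+\alpha)$ for the produced partition. For the time bound, a commit step that outputs a chain of $n_i=2^{t-1}+1$ points has run Chazelle's linear-time simplicity test on chains of sizes $3,5,9,\ldots,2^{t-1}+1,2^{t}+1$; the costs form a geometric series summing to $O(2^{t})=O(n_i)$, so the total time is $O(n)$.

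For the entropy bound, fix an optimal partition $c_1,\ldots,c_\kappa$ of sizes $m_1,\ldots,m_\kappa$ witnessing $\alpha$. The crucial invariant is that if $d_i$ (not the last output chain) starts at position $b_i$, then the doubling loop detected non-simplicity of the chain on $2n_i-1$ points, so the maximum length $M_i$ of any simple chain starting at $b_i$ satisfies $M_i\le 2n_i$. Since the optimal piece $c_j$ containing $b_i$ is simple, the suffix of $c_j$ starting at $b_i$ has length at most $M_i\le 2n_i$. Let $f(x)=x\log(n/x)$, a concave function vanishing at $0$ and hence subadditive on $[0,n]$; writing $L_{i,j}=|d_i\cap c_j|$, subadditivity gives $f(n_i)\le\sum_j f(L_{i,j})$, so $\sum_i f(n_i)\le\sum_j\sum_i f(L_{i,j})$.

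I would then bound $\sum_i f(L_{i,j})$ separately for each $j$. At most two of the chains overlapping $c_j$ straddle a boundary with $c_{j-1}$ or $c_{j+1}$; for the remaining middle chains $d_{i_1},\ldots,d_{i_t}$ entirely contained in $c_j$, writing $R_r$ for the length of the suffix of $c_j$ starting at $b_{i_r}$, the invariant yields $n_{i_r}\ge R_r/2$, and unrolling gives $R_r\le m_j/2^{r-1}$. Since subadditivity of $f$ means splitting a block only increases $\sum f$, the extremum under the constraint $n_{i_r}\ge R_r/2$ is the geometric sequence $n_{i_r}=m_j/2^{r}$, which evaluates to $m_j\log(n/m_j)+2m_j\log 2=f(m_j)+O(m_j)$ via $\sum_{r\ge1}1/2^{r}=1$ and $\sum_{r\ge1}r/2^{r}=2$. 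The two boundary pieces contribute at most $O(f(m_j))$ when $m_j\le n/e$ (since $f$ is monotonically increasing on $[0,n/e]$ and both intersections are at most $m_j$); the at most $e$ pieces with $m_j>n/e$ yield only $O(n)$ in total boundary contribution because $f$ is globally bounded by $O(n)$. The single final chain for which the invariant may fail adds at most $O(n)$. Summing over $j$ gives $\sum_i f(n_i)\le O(\sum_j f(m_j)+n)=O(n(1+\alpha))$, and adding the extra $n$ yields $n(1+\mathcal{H}(n_1,\ldots,n_k))\le O(n(1+\alpha))$.

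The main obstacle is the per-piece extremal bound: one must argue via concavity and subadditivity of $f$ that the one-sided constraint $n_{i_r}\ge R_r/2$ forces the worst case to be exactly the geometric sequence $m_j/2^r$, then evaluate the resulting series to obtain $f(m_j)+O(m_j)$ cleanly, while carefully isolating the at most two straddling boundary pieces per $c_j$ and the at most one short final chain that the doubling search may produce at the end of the input.
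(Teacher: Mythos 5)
Your proposal is correct and follows essentially the same route as the paper, which (deferring to the proof of Theorem~3) charges each optimal simple subchain $c_j$ a contribution of $O(m_j\log\frac{n}{m_j}+m_j)$; your doubling invariant $n_{i_r}\ge R_r/2$ is exactly what makes the remaining suffix of $c_j$ decay geometrically, and the subadditivity bookkeeping plus the two-straddler-per-piece accounting matches the intended argument. The only loose point is the claim that the extremum is exactly the geometric sequence $m_j/2^r$; this should be replaced by the direct bound $n_{i_r}\le R_r\le m_j/2^{r-1}$ (or a short induction), which yields the same $O(f(m_j)+m_j)$ per piece.
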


The proof of this lemma is similar to the proof of Theorem~\ref{theo:simple} where the number of operations for each simple subchain of a partition into simple subchains is bounded separately. The following theorem summarizes the synergistic result in this section.

\end{LONG}

\begin{theorem}\label{theo:syn-hull}
  Let $\mathcal{S}$ be a set of points in the plane such that ${\cal S}$ can be partitioned into $\kappa$ simple subchains. Let $h$ be the number of points in the \textsc{Upper Hull} of ${\cal S}$. There exists an algorithm which time complexity is within $O(n + \sum^{\delta}_{i=1}\log{\binom{\kappa}{m_i}}) \subseteq O(n\log(\min(\kappa, h)))$ when it computes the \textsc{Upper Hull} of $\mathcal{S}$, where $\delta$ is the length of the certificate ${\cal C}$ computed by the union algorithm; and $m_1, \dots, m_\delta$ is a sequence where $m_i$ is the number of upper hull sequences of the simple subchains whose blocks form the $i$-th argument of ${\cal C}$. This number of comparisons is optimal in the worst case over instances $\mathcal{S}$ formed by $\kappa$ simple subchains which \textsc{Upper Hulls} have certificates ${\cal C}$ of length $\delta$ such that $m_1, \dots, m_\delta$ is a sequence where $m_i$ is the number of upper hull of simple subchains whose blocks form the $i$-th argument of ${\cal C}$.
\end{theorem}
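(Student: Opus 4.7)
The plan is to assemble three ingredients already presented in the paper. First, I would run the \texttt{Doubling Search Partition} on $\mathcal{S}$ to obtain, in $O(n)$ time, a decomposition into simple subchains whose size distribution is within a constant factor of the optimal $\kappa$-partition in entropy. Second, on each simple subchain I would apply Melkman's linear-time algorithm to produce its Upper Hull, for an aggregate $O(n)$ cost. Third, I would feed these Upper Hulls to \texttt{Quick Union Hull}, whose complexity by Theorem~\ref{theo:quh} is bounded by $O\bigl(\sum_{j=1}^{\beta}\log s_j + \sum_{i=1}^{\delta}\log\binom{\kappa}{m_i}\bigr)$. Since the block sizes $s_j$ partition the input and $\log s_j \le s_j$, the first sum is absorbed into $O(n)$, yielding an overall running time of $O\bigl(n + \sum_{i=1}^{\delta}\log\binom{\kappa}{m_i}\bigr)$.

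To collapse this into $O(n\log(\min(\kappa,h)))$, I would argue on two sides. Using $\log\binom{\kappa}{m_i}\le m_i\log\kappa$ together with the observation that each certificate argument accounts for $O(1)$ hull vertices while $\delta\in O(h)$, I get $\sum_i \log\binom{\kappa}{m_i}\in O(h\log\kappa)\subseteq O(n\log\kappa)$. Symmetrically, the output-sensitive Kirkpatrick--Seidel bound of $O(n\log h)$ is always an upper bound on the merging cost of any comparison-based hull algorithm run on the $\kappa$ partial hulls, so the term is also in $O(n\log h)$. Combining these two absorptions delivers $O(n\log(\min(\kappa,h)))$.

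For the matching lower bound, I would follow the adversary strategy sketched for the synergistic maxima theorem. The plan is to construct, for each parameter choice $(n,\kappa,\delta,m_1,\ldots,m_\delta)$, a distribution of instances on which any deterministic comparison-based algorithm must spend $\Omega\bigl(n+\sum_i \log\binom{\kappa}{m_i}\bigr)$ comparisons on average, then invoke Yao's minimax principle to lift this to worst-case optimality over deterministic and randomized algorithms alike. The $\Omega(n)$ floor follows from a single-subchain family via the quadratic-lifting reduction from \textsc{Sorting} already used in the proof of Theorem~\ref{theo:simple}. The $\Omega(\sum_i \log\binom{\kappa}{m_i})$ term is obtained by geometrically encoding, per argument, a choice of which $m_i$ of the $\kappa$ subchains actually contribute a supporting tangent. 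The main obstacle will be this second family: the adversary must produce point configurations whose tangent relations genuinely force the algorithm to distinguish all $\binom{\kappa}{m_i}$ possibilities for each hull vertex of the union, while still being partitionable into exactly $\kappa$ simple subchains and without the simplicity of each subchain inadvertently revealing enough structure to shortcut the argument. Controlling this encoding so that the adversary bounds compose across arguments is where most of the geometric work will lie.
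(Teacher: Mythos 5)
Your pipeline matches the paper's: linear-time \texttt{Doubling Search Partition}, Melkman's algorithm on each simple subchain, then \texttt{Quick Union Hull} analyzed via Theorem~\ref{theo:quh} with $\sum_{j}\log s_j\le\sum_j s_j\le n$ absorbed into the linear term, and a family of hard instances combined with Yao's minimax principle for the optimality claim. That part is sound and is essentially the paper's own argument (the paper likewise only sketches the adversary construction, so your honest flagging of the hard-instance family as the remaining geometric work is at the level of detail the paper itself provides).

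The genuine gap is in your justification of the containment $O(n+\sum_{i}\log\binom{\kappa}{m_i})\subseteq O(n\log(\min(\kappa,h)))$. First, from $\log\binom{\kappa}{m_i}\le m_i\log\kappa$ and $\delta\in O(h)$ you cannot conclude $O(h\log\kappa)$: that would need $\sum_i m_i\in O(h)$, which does not follow, since a single argument may involve up to $\kappa$ sequences. (The $O(n\log\kappa)$ half can be repaired directly: each of the $m_i$ sequences contributes at least one block to the $i$-th argument and the blocks of the computed certificate are disjoint, so $\sum_i m_i\le\beta\le n$ and hence $\sum_i\log\binom{\kappa}{m_i}\le n\log\kappa$.) Second, and more seriously, your $O(n\log h)$ half rests on the assertion that the Kirkpatrick--Seidel bound ``is always an upper bound on the merging cost of any comparison-based hull algorithm''; this is false --- an upper bound proved for one algorithm transfers nothing to \texttt{Quick Union Hull} run on the same input. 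To obtain the $O(n\log h)$ side you must argue about this algorithm or its certificate: for instance, bound the number of recursive calls by the size $h$ of the \textsc{Upper Hull} of the union (each call certifies at least one new hull vertex $p$, via the block convex argument of Step~5) and then bound the per-call contribution to $\sum_i\log\binom{\kappa}{m_i}$, or directly relate $\delta$ and the $m_i$'s of the computed certificate to $h$. As written, that step would not survive scrutiny, even though the containment you are aiming for is exactly the one stated in the theorem.
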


\begin{VLONG}
  Even though the algorithms are more complex, except for some details, the proofs of Theorems~\ref{theo:quh} and~\ref{theo:syn-hull} and Lemma~\ref{lem:opt-hull} of Section~\ref{sec:convex} are very similar to those described in the previous section.
  We describe the intuition for the lower bound below: as for the computation of \textsc{Maxima Sets}, it is a simple adversary argument, based on the definition of a family of ``hard'' instances for each possible value of the parameters of the analysis, building over each other, but the combination of elementary instances requires a little bit of extra care.

  First, we verify the lower bound for ``easy'' instances, of finite difficulty:
  general instances formed by a single ($\kappa=1$) simple sequence obviously require $\Omega(n)$ comparisons (no correct algorithm can afford to ignore a single point of the input), while
  general instances dominated by a single edge ($h=1)$ also require $\Omega(n)$ comparisons.
  Each of these lower bound yields a distribution of instances, either decomposed into $\kappa=1$ simple chains or of output size $h=1$, such that any deterministic algorithm performs $\Omega(n)$ comparisons on average on a uniform distribution of those instances.

  Such distributions of ``elementary'' instances can be duplicated so that to produce various distributions of elementary instances; and can be combined so that to define a distribution of harder instances.

\begin{lemma}
  Given the positive integers $n,\kappa, \beta, s_1, \ldots, s_\beta, \delta, m_1, \ldots, m_\delta$,
  there is a family of instances which can each be partitioned into $\kappa$ simple subchains such that,
  \begin{itemize}
  \item $\beta$ and $s_1, \ldots, s_\beta$ are the number and sizes of the blocks in the certificate ${\cal C}$ computed by the union algorithm, respectively;
  \item $\delta$ is the length of ${\cal C}$;
  \item $m_1, \ldots, m_\delta$ is a sequence where $m_i$ is the number of upper hulls of the simple subchains whose blocks form the $i$-th argument of ${\cal C}$; and
  \item on average on a uniform distribution of these instances, any algorithm computing the \textsc{Upper Hull} of $S$ in the comparison model performs within $\Omega(n + \sum^{\delta}_{i=1}\log{\binom{\kappa}{m_i}})$ comparisons.
  \end{itemize}
\end{lemma}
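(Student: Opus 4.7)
The plan is to build the hard distribution in three stages and combine them, mirroring the adversary argument sketched in the previous section for \textsc{Maxima Sets}.

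The first stage supplies the $\Omega(n)$ term. Consider instances where all $n$ points lie on a single convex arc (so $\kappa=1$): any correct algorithm must spend at least one comparison per input point to certify that point lies on the hull, yielding $\Omega(n)$ on average in the algebraic decision tree model. A symmetric distribution where $O(1)$ output points dominate all others yields the same bound by forcing verification of each dominated point.

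The second stage supplies the combinatorial term. For a single argument with $m$ participating sequences out of $\kappa$, I would construct a gadget of $\kappa$ candidate convex bumps placed close together in a thin vertical strip, only $m$ of them raised high enough to contribute to the block eliminator or block convex argument, with the identity of the raised $m$-subset drawn uniformly over all $\binom{\kappa}{m}$ possibilities. Any correct algorithm must name at least one argument point inside the gadget in its certificate; locating such a point forces the algorithm to essentially identify the raised subset, and Yao's principle combined with the standard $\Omega(\log N)$ lower bound for locating a distinguished element among $N$ indistinguishable candidates gives $\Omega(\log \binom{\kappa}{m})$ comparisons on average for the gadget. Gluing $\delta$ such gadgets side by side in disjoint vertical strips --- each of the $\kappa$ simple subchains routed monotonically in $x$ so that it visits its candidate bump in every strip and remains a simple polyline globally --- makes comparisons in different strips independent, so the $\delta$ entropies sum. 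Inflating the total input size to $n$ with Stage-1 filler along each subchain and calibrating bump widths and counts yields an instance whose certificate has exactly the prescribed $\beta$, $s_1,\dots,s_\beta$, $\delta$, and $m_1,\dots,m_\delta$. Linearity of expectation then produces the claimed $\Omega(n + \sum_{i=1}^{\delta}\log\binom{\kappa}{m_i})$ lower bound on any deterministic algorithm under this distribution.

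The main obstacle is the simultaneous satisfaction of all certificate constraints. Monotone routing of the $\kappa$ subchains through the $\delta$ strips keeps each of them simple, but matching the prescribed block sizes $s_1,\dots,s_\beta$ requires careful placement of bumps and fillers so that the canonical certificate produced by \texttt{Quick Union Hull} has exactly the target block structure, and independence across arguments requires that no filler or bump in one strip leaks information about which $m_i$-subset was raised in another. I would resolve these by drawing the random subsets for distinct strips independently and by ensuring that the $x$-ranges of distinct gadgets are disjoint, so that no single comparison can simultaneously probe two strips' random choices; the deterministic bound then transfers to the worst case of any randomized algorithm via Yao's minimax principle.
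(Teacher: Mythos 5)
Your proposal is correct and follows essentially the same route as the paper, which only sketches this lemma: elementary hard distributions giving the $\Omega(n)$ term (a single simple chain, respectively a constant-size output), then duplicated and combined gadgets in which a uniformly random $m_i$-subset of the $\kappa$ upper hull sequences participates in each argument, yielding the $\sum_{i=1}^{\delta}\log\binom{\kappa}{m_i}$ term by an information-theoretic/adversary count, with Yao's minimax principle reserved for the randomized corollary. Your explicit bump-gadget construction and the disjoint $x$-range independence argument simply flesh out the paper's remark that ``the combination of elementary instances requires a little bit of extra care.''
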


Finally, any such distribution with a computational lower bound on average yields a computational lower bound for the worst case instance complexity of any randomized algorithm, on average on its randomness; and as a particular case a lower bound on the worst case complexity of any deterministic algorithm:

\begin{corollary}
  Given the positive integers $\kappa, \beta, s_1, \ldots, s_\beta, \delta, m_1, \ldots, m_\delta$, and an algorithm $A$ computing the \textsc{Upper Hull} of a sequence of $n$ planar points in the algebraic decision tree computational model (whether deterministic or randomized), there is an instance $I$ such that $A$ performs a number of comparisons within $\Omega(n + \sum^{\delta}_{i=1}\log{\binom{\kappa}{m_i}})$ when it computes the \textsc{Upper Hull} of $I$.
\end{corollary}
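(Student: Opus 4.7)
\begin{PROOF}
\begin{proof}[Proof sketch]
The plan is to establish the upper bound by analyzing the three phases of the \texttt{(Simple, Structure) Synergistic Hull} algorithm separately and summing their costs, and to establish optimality through an adversary construction parametrized by the tuple $(n,\kappa,\beta,s_1,\ldots,s_\beta,\delta,m_1,\ldots,m_\delta)$.

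For the upper bound, I would first invoke the linear-time guarantee of the \texttt{Doubling Search Partition} algorithm to produce a partition of $\mathcal{S}$ into at most $O(\kappa)$ simple subchains in time $O(n)$. The second phase applies Melkman's algorithm~\cite{1987-IPL-OnLineConstructionOfTheConvexHullOfASimplePolyline-Melkman} on each simple subchain to compute its upper hull in time linear in its size, contributing $O(n)$ across all subchains. The third phase feeds those upper hull sequences to the \texttt{Quick Union Hull} algorithm, whose cost is bounded by Theorem~\ref{theo:quh} as $O(\sum_{j=1}^{\beta}\log s_j + \sum_{i=1}^{\delta}\log\binom{\kappa}{m_i})$. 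Since the certificate blocks are disjoint subblocks of the upper hull sequences, $\sum_{j=1}^{\beta} s_j \le n$, and hence $\sum_{j=1}^{\beta}\log s_j \in O(n)$. Summing the three phases yields the stated $O(n + \sum_{i=1}^{\delta}\log\binom{\kappa}{m_i})$ bound. The containment in $O(n\log\min(\kappa,h))$ follows by bounding each $m_i\le\kappa$ and using $\delta\le h$ together with the concavity of $\log\binom{\kappa}{\cdot}$.

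For the optimality part, I would construct, for each admissible tuple of parameters, a distribution of instances on which every deterministic algorithm in the algebraic decision tree model performs $\Omega(n + \sum_{i=1}^{\delta}\log\binom{\kappa}{m_i})$ comparisons on average, and then apply Yao's minimax principle~\cite{1977-FOCS-ProbabilisticComputationsTowardAUnifiedMeasureOfComplexity-Yao} to transfer the lower bound to the worst case instance complexity of any (possibly randomized) algorithm. The distribution is assembled from two atomic families: single-simple-chain instances ($\kappa=1$) force $\Omega(n)$ work just to read the input, while instances whose upper hull is a single edge ($h=1$) force $\Omega(n)$ comparisons to certify that each non-extremal point lies below that edge. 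Placing many such atomic gadgets inside disjoint, geometrically isolated strips (choosing slope windows so that tangents and eliminator lines computed in one strip give no information about another) yields an instance whose global parameters match the prescribed tuple.

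The main obstacle will be extracting the $\sum_{i=1}^{\delta}\log\binom{\kappa}{m_i}$ summand from the adversary. My plan is to design, for each $i$, a gadget in which the identities of the $m_i$ out of $\kappa$ upper hull sequences contributing a block to the $i$-th argument of the certificate can be permuted uniformly at random by the adversary without changing the overall parameters; any correct algorithm must distinguish among $\binom{\kappa}{m_i}$ such configurations, which costs $\Omega(\log\binom{\kappa}{m_i})$ comparisons in the information-theoretic sense. Because the permutations across the $\delta$ arguments can be chosen independently and the gadgets are geometrically decoupled, the per-argument lower bounds add and combine with the $\Omega(n)$ reading cost from the atomic instances to give the claimed $\Omega(n+\sum_{i=1}^{\delta}\log\binom{\kappa}{m_i})$ lower bound. \qed
\end{proof}
\end{PROOF}
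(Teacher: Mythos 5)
Your lower-bound argument follows essentially the same route as the paper: a distribution of hard instances assembled from elementary gadgets (single simple chain with $\kappa=1$ forcing $\Omega(n)$, single-edge hull with $h=1$ forcing $\Omega(n)$, and combined gadgets encoding the choice of $m_i$ out of $\kappa$ sequences per argument to yield $\sum_{i=1}^{\delta}\log\binom{\kappa}{m_i}$), followed by Yao's minimax principle to transfer the average-case deterministic bound to the worst case of any randomized algorithm, which is exactly how the paper derives this corollary from its preceding lemma. Note only that the corollary asserts just the lower bound, so your first paragraph analyzing the phases of the \texttt{(Simple,Structure) Synergistic Hull} algorithm is extraneous here (it belongs to Theorem~\ref{theo:syn-hull}); the relevant portion of your proposal is correct and matches the paper's approach.
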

\begin{proof}
  \begin{LONG}
    A direct application of Yao's minimax principle \cite{1958-PJM-OnGeneralMinimaxTheorems-Sion,1977-FOCS-ProbabilisticComputationsTowardAUnifiedMeasureOfComplexity-Yao,1944-BOOK-TheoryOfGamesAndEconomicBehavior-VonNeumannMorgenstern}.  \qed
  \end{LONG}
  \begin{SHORT}
    A direct application of Yao's minimax principle \cite{1958-PJM-OnGeneralMinimaxTheorems-Sion}.  \qed
  \end{SHORT}
\end{proof}
\end{VLONG}

\begin{LONG}
  This concludes the description of our synergistic results. In the next section, we discuss the issues left open for improvement.
\end{LONG}

\begin{LONG}
  \section{Discussion}
  \label{sec:discussion}

  Considering the computation of the \textsc{Maxima Set} and of the \textsc{Convex Hull}, we have built upon previous results taking advantage either of some notions of input order or of some notions of input structure, to describe solutions which take advantage of both in a synergistic way. There are many ways in which those results can be improved further: we list only a selection here.

\begin{INUTILE}
  First, Afshani et al.~\cite{2009-FOCS-InstanceOptimalGeometricAlgorithms-AfshaniBarbayChan} refined Kirkpatrick and Seidel's input structure adaptive results~\cite{1985-SOCG-OutputSizeSensitiveAlgorithmsForFindingMaximalVectors-KirkpatrickSeidel} for both the computation of the \textsc{Maxima Set} and of the \textsc{Convex Hull}: even those their solution is not of practical use (because of high constant factors), it would be interesting to obtain a synergistic solution which outperforms theirs.
\end{INUTILE}
In the same line of thought, Ahn and Okamoto~\cite{2011-IEICE-AdaptiveAlgorithmsForPlanarConvexHullProblems-AhnOkamoto} described some other notion of input order than the one we considered here, which can potentially yield another synergistic solution in combination with a given notion of input structure. This is true for any of the many notions of input order which could be adapted from \textsc{Sorting}~\cite{1992-ACJ-AnOverviewOfAdaptiveSorting-MoffatPetersson}.

Whereas being adaptive to as many measures of difficulty as possible at once is a worthy goal in theory, it usually comes at a price of an increase in the constant factor of the running time of the algorithm: it will become important to measure, for the various practical applications of each problem, which measures of difficulty take low value in practice. It will be necessary to do some more theoretical work to identify what to look for in the practical applications, but then it will be important to measure the practical difficulties of the instances.

\medskip \textbf{Acknowledgments:}
The authors would like to thank Javiel Rojas for helping with the bibliography on the computation of the \textsc{Maxima Set} of a set of points.
\end{LONG}


\bibliographystyle{splncs}
\bibliography{addedForThePaper,bibliographyDatabaseJeremyBarbay,publicationsJeremyBarbay}

\end{document}
